\documentclass[a4paper,fleqn]{llncs}

\title{VLDL Satisfiability and Model Checking via Tree Automata\thanks{Supported by the project ``TriCS'' (ZI 1516/1-1) of the German Research Foundation (DFG).}}
\author{Alexander Weinert}

\institute{Reactive Systems Group, Saarland University, 66123 Saarbr{\"u}cken, Germany\\
 \email{weinert@react.uni-saarland.de}}
 
\usepackage{amsmath}
\usepackage{amssymb}
\usepackage{mathtools}
\usepackage{xspace}
\usepackage[utf8]{inputenc}
\usepackage{booktabs}
\usepackage{enumitem}
\usepackage{fullpage}

\usepackage{tikz}
\usetikzlibrary{automata,calc,decorations.pathreplacing}
\tikzset{initial text={}}
\tikzset{every initial by arrow/.style={-stealth}}

\usepackage{url}

\newcommand{\myquot}[1]{``#1''}


\newcommand{\initmark}{I}

\newcommand{\tikztree}[3]{\begin{tikzpicture}[baseline,thick]
	\node (root) at (0,.5) {#1};
	\node (lhs) at (-.5,-.5) {#2};
	\node (rhs) at (.5,-.5) {#3};
	
	\draw 	(root) -- (lhs)
			(root) -- (rhs);
\end{tikzpicture}}

\newcommand{\StackTree}{\mathrm{st}}
\newcommand{\height}{\textsc{height}}
\newcommand{\replacedby}{\leftarrow}
\newcommand{\node}{\textsc{node}}

\newcommand{\compl}[1]{\mkern 1.5mu\overline{\mkern-1.5mu#1\mkern-1.5mu}\mkern 1.5mu}

\newcommand{\direction}{\rightsquigarrow}
\newcommand{\direct}{\rightarrow}
\newcommand{\jump}{\curvearrowright}


\newcommand{\coloneq}{\mathop{:=}\,}

\newcommand{\bigo}{{\mathcal O}}

\renewcommand{\epsilon}{\varepsilon}


\newcommand{\nats}{\mathbb{N}}
\newcommand{\card}[1]{{|#1|}}

\newcommand{\set}[1]{\{#1\}}
\newcommand{\restr}[2]{{\left.\kern-\nulldelimiterspace #1 \vphantom{\big|} \right|_{#2} }}
\newcommand{\range}{\mathit{range}}

\newcommand{\bools}{{\mathbb{B}}}

\newcommand{\aut}{\mathfrak{A}}
\newcommand{\treeaut}{\mathfrak{T}}

\newcommand{\vpdalphabet}{{\widetilde{\Sigma}}}
\newcommand{\calls}{\Sigma_c}
\newcommand{\returns}{\Sigma_r}
\newcommand{\locals}{\Sigma_l}

\newcommand{\vpa}{\text{VPA}\xspace}
\newcommand{\vpas}{\text{VPAs}\xspace}

\newcommand{\tnvpa}{\text{TVPA}\xspace}

\newcommand{\oneaja}{\text{1-AJA}\xspace}
\newcommand{\oneajas}{\text{1-AJAs}\xspace}

\newcommand{\acc}{\mathit{acc}}
\newcommand{\rej}{\mathit{rej}}

\newcommand{\graph}{G}

\newcommand{\lang}{{\mathcal L}}


\newcommand{\vpsys}{{\mathcal S}}

\newcommand{\halfthinspace}{{\kern .08333em}}


\newcommand{\cl}{\mathrm{cl}}

\newcommand{\ddiamond}[1]{\langle #1 \rangle}
\newcommand{\bbox}[1]{[\halfthinspace #1 \halfthinspace]}

\newcommand{\Raut}{\mathcal{R}}

\newcommand{\ltl}{\text{LTL}\xspace}

\newcommand{\vldl}{\text{VLDL}\xspace}
\newcommand{\vltl}{\text{VLTL}\xspace}



\newcommand{\exptime}{\textsc{{ExpTime}}\xspace}

\newcommand{\threeexp}{\textsc{{3ExpTime}}\xspace}


\newcommand{\steps}{\mathit{steps}}
\newcommand{\sh}{\mathit{sh}}


\newcommand{\app}{\mathit{app}}

\newcommand{\ini}{0}
\newcommand{\fin}{1}

\newcommand{\dirs}{\mathit{Dirs}}
\newcommand{\comms}{\mathit{Comms}}




\newcommand{\bplus}{\mathcal{B}^{+}}


\newcommand{\vps}{\text{VPS}\xspace}

\newcommand{\traces}{\mathit{traces}}

\pgfdeclarelayer{background}
\pgfdeclarelayer{foreground}
\pgfsetlayers{background,main,foreground}

\begin{document}

\maketitle

\begin{abstract}
We present novel algorithms solving the satisfiability problem and the model checking problem for Visibly Linear Dynamic Logic (\vldl) in asymptotically optimal time via a reduction to the emptiness problem for tree automata with Büchi acceptance.
Since \vldl allows for the specification of important properties of recursive systems, this reduction enables the efficient analysis of such systems.

Furthermore, as the problem of tree automata emptiness is well-studied, this reduction enables leveraging the mature algorithms and tools for that problem in order to solve the satisfiability problem and the model checking problem for \vldl.
\end{abstract}

\section{Introduction}

Visibly Linear Dynamic Logic (\vldl)~\cite{WeinertZimmermann16a} is an expressive formalism for specifying properties of recursive systems that allows for an intuitive and modular specification of an important subclass of context-free properties.
Although there exist tight bounds on the asymptotical complexity of the satisfiability- and the model checking problem for \vldl properties~\cite{WeinertZimmermann16a}, the upper bounds for both problems are witnessed by algorithms that rely on an intricate reduction of the problems to the emptiness problem for visibly pushdown automata~\cite{AlurMadhusudan04}, for which tool support is lacking.

We present novel reductions of the problems of \vldl satisfiability and \vldl model checking to the emptiness problem for tree automata~\cite{Thomas90}, yielding algorithms for both problems running in asymptotically optimal time.
Moreover, as the emptiness problem for tree automata reduces to the problem of solving two-player games with perfect information~\cite{Niessner01}, which is of great importance in the fields of program verification and program synthesis and enjoys mature tool support, the algorithms yielded by our reductions allow us to leverage this tool support for solving the problems of \vldl satisfiability and \vldl model checking.

\vldl is an extension of Linear Temporal Logic (\ltl)~ \cite{Pnueli77}, the de-facto standard for the specification of properties of non-recursive systems.
Although popular, it is lacking in expressivity, as it cannot even express all~$\omega$-regular properties.
The logic \vldl addresses this shortcoming by guarding the temporal operators of~\ltl with visibly pushdown automata (\vpas)~\cite{AlurMadhusudan04}.
A \vpa is a pushdown automaton that operates over a predefined partition of an alphabet into calls, returns, and local actions, and has to push (pop) a symbol onto (off) its stack whenever it reads a call (return).
Upon processing local actions, the automaton must not touch the stack.

Due to these restrictions, a \vldl formula can be compiled into an equivalent \vpa over infinite words of exponential size~\cite{WeinertZimmermann16a}.
As a first step in this construction, the \vldl formula is translated into a \oneaja~\cite{bozzelli07}, an automaton without stack that is able to jump from a call to its matching return.
This automaton can then be transformed into a \vpa of exponential size~\cite{bozzelli07}.
Since each visibly pushdown automaton is a classical pushdown automaton, the emptiness problem for \vpas is decidable in polynomial time~\cite{AlurMadhusudan04}.
The translation from \oneajas to \vpas, however, is quite involved, as it works for a far more complex model than is needed for the translation of \vldl formulas into \oneajas, thus hampering efforts towards an implementation of the translation from \vldl to \vpas.
This effort is further encumbered by the scant availability of emptiness checkers and model checkers for pushdown systems.

In this work, we introduce novel algorithms solving both the emptiness problem and the model checking problem for \vldl formulas in asymptotically optimal time using a translation of \vldl formulas to nondeterministic tree automata with Büchi acceptance.
The technical core of this translation is formed by an encoding of words over visibly pushdown alphabets into trees that is adapted from the encoding of such words given by Alur and Madhusudan~\cite{AlurMadhusudan04}, as well as by a translation of the \oneajas constructed from \vldl formulas into tree automata using an adaptation of the breakpoint-construction by Miyano and Hayashi~\cite{MiyanoHayashi84} in order to remove alternation and obtain a nondeterministic automaton.
Satisfiability of a \vldl formula is then checked by checking the resulting tree automaton for emptiness.
For model checking a visibly pushdown system against a \vldl specification, we translate the negation of the specification as well as the visibly pushdown system into tree automata, which we intersect and check for emptiness.

Thus, we reduce both the satisfiability- and the model checking problem for \vldl to the emptiness problem for nondeterministic tree automata with Büchi acceptance.
Hence, we reduce the complex formalism of \vldl to the simple model of nondeterministic tree automata.
Moreover, since the problem of tree automata emptiness reduces to that of solving Büchi games, which is solvable efficiently~\cite{ChatterjeeHenzingerPiterman06,ChatterjeeHenzinger12} and enjoys mature tool support~\cite{FriedmannLange09,Keiren09}, our novel reductions enable an efficient implementation of satisfiability checkers and model checkers for \vldl.

\paragraph{Related Work}

There exist a number of logics other than~\vldl that capture the class of visibly pushdown languages, most prominently \vltl~\cite{BozzelliCesar17}, a fixed-point logic~\cite{bozzelli07} and monadic second order logic augmented with a binary matching predicate (MSO$_\mu$)~\cite{AlurMadhusudan04}.
We focus here on the logic~\vldl, as it most naturally extends the concepts used by \ltl~\cite{Pnueli77}, the de-facto standard for the specification of non-recursive properties.

Moreover, there exist tools for model checking recursive problems, e.g., \textsc{Bebop}~\cite{BallRajamani00,BallRajamani01} and \textsc{Moped}~\cite{Schwoon02,SuwimonteerabuthSchwoonEsparza05}.
These tools are, however, no longer under active development, and have, to the best of our knowledge, not found widespread adoption.
In combination with the intricate translation of alternating automata into \vpas, this motivates the development of the novel translation of \vldl formulas into tree automata presented in this work.

A number of problems have been reduced to the emptiness problem for tree automata, as they are a natural model for capturing the branching-time behavior of systems~\cite{Niessner01}.
Moreover, the theory of tree automata is well-studied, with its most famous result being equivalence of tree automata and monadic second order logic of two successors~\cite{Thomas97,Zielonka98}.
Finally, the emptiness problem for tree automata with Büchi acceptance reduces to the problem of solving two-player Büchi games with perfect information~\cite{FijalkowPinchinatSerre13}.
Such games can be solved efficiently~\cite{ChatterjeeHenzinger12} and, since Büchi games are a special case of the ubiquitous parity games, there exists mature tool support for solving them~\cite{FriedmannLange09,Keiren09}.

\paragraph{Our Contributions}

Firstly, in Section~\ref{sec:stack-trees} we adapt the tree-encoding of words over visibly pushdown alphabets first introduced by Alur and Madhusudan~\cite{AlurMadhusudan04} and show that the resulting trees are recognizable by a tree automaton with Büchi acceptance condition in Theorem~\ref{thm:stack-tree-regularity}.

Secondly, in Section~\ref{sec:vldl-satisfiability}, we show how to construct tree automata recognizing the encodings of all words satisfying a given \vldl formula in Theorem~\ref{thm:vldl-to-tree-automata}.
Moreover, we show that the resulting automaton is of exponential size measured in the size of the original formula and we show that this translation yields an asymptotically optimal algorithm for satisfiability checking of \vldl formulas.

Finally, in Section~\ref{sec:vldl-model-checking} we provide a translation of visibly pushdown systems into tree automata recognizing the encodings of all traces of the system.
When combined with the previously presented translation of \vldl formulas into tree automatas, we obtain an asymptotically optimal algorithm for model checking visibly pushdown systems against \vldl specifications.
This result is given in Theorem~\ref{thm:vldl-model-checking}.

\section{Preliminaries}
\label{sec:preliminaries}

In this section we introduce the basic notions used in the remainder of this work, namely (nondeterministic) visibly pushdown automata and related concepts.

\subsection{Visibly Pushdown Languages}

A pushdown alphabet $\vpdalphabet = (\calls, \returns, \locals)$ is a finite set $\Sigma$ that is partitioned into calls $\calls$, returns $\returns$ and local actions $\locals$.
We write $w = w_0\cdots w_n$ and $\alpha = \alpha_0\alpha_1\alpha_2\cdots$ for finite and infinite words, respectively, and define the stack height reached by any automaton after reading $w$ by $\sh(w)$ inductively as $\sh(\epsilon) = 0$, $\sh(wc) = \sh(w) + 1$ for $c \in \calls$, $\sh(wr) = \max \set{0, \sh(w) - 1}$ for $r \in \returns$, and $\sh(wl) = \sh(w)$ for $l \in \locals$.
Let~$\alpha$ be a finite or infinite word.
We say that a call $\alpha_k \in \calls$ at some position $k$ of~$\alpha$ is matched if there exists a $k' > k$ such that $\alpha_{k'} \in \returns$ and $\sh(\alpha_0\cdots \alpha_{k-1}) = \sh(\alpha_0\cdots \alpha_{k'})$ and call the return at the smallest such position $k'$ the matching return of $c$.
Otherwise we call~$c$ an unmatched call.
If~$\alpha_k$ is a matched call with~$\alpha_{k'}$ as its matching return, we call the infix~$\alpha_{k+1}\cdots\alpha_{k'-1}$ of~$\alpha$ the nested infix of position~$k$.
A word is well-matched if it does not contain a return that is not a matching return.

A visibly pushdown system (\vps) $\vpsys = (Q, \vpdalphabet, \Gamma, \Delta, q_\initmark)$ consists of a finite set~$Q$ of states, a pushdown alphabet $\vpdalphabet$, a stack alphabet $\Gamma$, which contains a stack-bottom marker $\bot$, a transition relation $ \Delta \subseteq (Q \times \calls \times Q \times (\Gamma \setminus \set{\bot})) \cup (Q \times \returns \times \Gamma \times Q) \cup (Q \times \locals \times Q) $, and an initial state~$q_\initmark \in Q$.
A configuration $(q, \gamma)$ of $\vpsys$ is a pair of a state $q \in Q$ and a stack content $\gamma \in \Gamma_c = (\Gamma \setminus \set{\bot})^* \cdot \bot$.
The \vps $\vpsys$ induces the configuration graph $\graph_\vpsys = (Q \times \Gamma_c, E)$ with $E \subseteq ((Q \times \Gamma_c) \times \Sigma \times (Q \times \Gamma_c))$ and $((q, \gamma), a, (q', \gamma')) \in E$  if and only if either
\begin{itemize}[nolistsep]
\item $a \in \calls$, $(q, a, q', A) \in \Delta$, and $A\gamma = \gamma'$,
\item $a \in \returns$, $(q, a, \bot, q') \in \Delta$, and $\gamma = \gamma' = \bot$,
\item $a \in \returns$, $(q, a, A, q') \in \Delta$, $A \neq \bot$, and $\gamma = A\gamma'$, or
\item $a \in \locals$, $(q, a, q') \in \Delta$, and $\gamma = \gamma'$.
\end{itemize}
For an edge $e = ((q, \gamma), a, (q', \gamma'))$, we call~$a$ the label of~$e$.
A run $\pi = (q_0, \gamma_0)\cdots(q_n, \gamma_n)$ of $\vpsys$ on $w = w_0\cdots w_{n-1} \in \Sigma^*$ is a sequence of configurations where~$q_0 = q_\initmark$ and where $((q_i, \gamma_i), w_i, (q_{i+1}, \gamma_{i+1})) \in E$ in $\graph_\vpsys$ for all $i \in [0;n-1]$.
Infinite runs of~$\vpsys$ on infinite words are defined similarly.
We define~$\traces(\vpsys)$ as the set of all infinite words~$\alpha$ for which there exists a run of~$\vpsys$ on~$\alpha$.
Moreover, we define~$\card{\vpsys} = \card{Q}$.

\subsection{Visibly Linear Dynamic Logic}

Let $P$ be a finite set of atomic propositions and let $\vpdalphabet = (\calls, \returns, \locals)$ be a partition of $\Sigma = 2^P$.
The syntax of \vldl~\cite{WeinertZimmermann16a} is defined by the grammar
$
\varphi \coloneq p \mid \neg \varphi \mid \varphi \wedge \varphi \mid \varphi \vee \varphi
  \mid \ddiamond{\aut} \varphi 
  \mid \bbox{\aut} \varphi,
$
where $p \in P$ and $\aut$ ranges over testing visibly pushdown automata (\tnvpa) over the fixed alphabet $\vpdalphabet$.
A \tnvpa $\aut = (Q, \vpdalphabet, \Gamma, \Delta, q_\initmark, Q_F, t)$ consists of a \vps~$\vpsys = (Q, \vpdalphabet, \Gamma, \Delta, q_\initmark)$, a set of final states $Q_F \subseteq Q$, and a function $t$ mapping states to \vldl formulas over $\vpdalphabet$.
We define $\card{\varphi}$ as the sum of $\card{\cl(\varphi)}$ and the sum of the numbers of states of the automata contained in $\varphi$, where $\cl(\varphi)$ is the set of all subformulas of $\varphi$, including those contained as tests in automata and their subformulas.
We require this relation subformula-relation to be noncircular.
A run of~$\aut$ on a finite word~$w$ is a run of the underlying \vps~$\vpsys$ on~$w$.
Such a run is accepting if its final state is in~$Q_F$.

Let~$\varphi$ be a \vldl formula, let $\alpha = \alpha_0\alpha_1\alpha_2\cdots \in \Sigma^\omega$ and let $k \in \nats$ be a position in~$\alpha$.
We define the semantics of~$\varphi$ in the straightforward way for atomic propositions and Boolean connectives.
Furthermore, we define
\begin{itemize}[nolistsep]
\item $(\alpha, k) \models \ddiamond{\aut}\varphi$ if there exists $k' \geq k$ s.t.\ $(k, k') \in \Raut_\aut(\alpha)$ and $(\alpha, k') \models \varphi$,

\item $(\alpha, k) \models \bbox{\aut}\varphi$ if for all $k' \geq k$, $(k, k') \in \Raut_\aut(\alpha)$ implies $(\alpha, k') \models \varphi$,

\end{itemize}
with
\begin{multline*}
\Raut_\aut(\alpha) \coloneq \{ (k, k') \in \nats \times \nats \mid \exists \text{ acc. run } (q_0, \sigma_0)\cdots (q_{k' - k}, \sigma_{k' - k}) \\ \text{ of } \aut \text{ on } \alpha_k\cdots\alpha_{k'-1}
	\text{and } \forall m \in [0; k' - k].\ (\alpha, k' + m) \models t(q_m) \} \enspace .
\end{multline*}
We write $\alpha \models \varphi$ as a shorthand for $(\alpha, 0) \models \varphi$ and say that $\alpha$ is a model of $\varphi$ in this case.
The language of $\varphi$ is defined as $L(\varphi) = \set{ \alpha \in \Sigma^\omega \mid \alpha \models \varphi}$.
If~$\lang(\varphi) \neq \emptyset$, we say that~$\varphi$ is satisfiable.

\subsection{Tree Automata}

Let~$\bools = \set{0,1}$ and let~$\Sigma$ be an alphabet.
A~$\Sigma$-tree~$t$ is a mapping~$t\colon \bools^* \rightarrow \Sigma$.
We call a finite word~$b \in \bools^*$ a node and an infinite word~$\beta \in \bools^\omega$ a branch.
Given a node~$b$, we call the nodes~$b0$ and~$b1$ the left- and right-hand children of~$b$.
Analogously, we call the trees rooted at the left- and right-hand children of~$b$ the left- and right-hand subtrees of~$b$, respectively.
Moreover,~$b$ is the parent of both~$b0$ and~$b1$.
We call the node at address~$\epsilon$ the root of~$t$.
We say that a branch~$\beta$ contains a node~$b$ if~$b$ is a prefix of~$\beta$.
Similarly, as each node~$b$ is associated with the unique path from the root of the tree to~$b$, we say that a node~$b'$ is on the path to~$b$ if~$b'$ is a prefix of~$b$.
If~$t(b) = a$, we say that~$b$ is labeled with~$a$.
Moreover, given a tree~$t$ and a node~$b$, we define the sub-tree~$\restr{t}{b}$ of~$t$ rooted at~$b$ by~$\restr{t}{b}(b') = t(bb')$.

A tree automaton (with Büchi acceptance)~$\treeaut = (Q, \Sigma, \Delta, q_\initmark, Q_F)$ consists of a finite set of states~$Q$, an alphabet~$\Sigma$, a transition relation~$\Delta \subseteq Q \times \Sigma \times Q \times Q$, an initial state~$q_\initmark \in Q$, and a set of accepting states~$Q_F \subseteq Q$.
A run~$r$ of~$\treeaut$ on a $\Sigma$-tree~$t$ is a $Q$-tree with $r(\epsilon) = q_\initmark$ and~$(r(b), t(b), r(b0), r(b1)) \in \Delta$ for all~$b \in \bools^*$.
A branch of~$r$ is accepting if it contains infinitely many nodes~$b$ such that~$r(b) \in F$.
A run is accepting if all of its branches are accepting, while an automaton~$\treeaut$ accepts a tree~$t$ if there exists an accepting run of~$\treeaut$ on~$t$.
The language~$L(\treeaut)$ of~$\treeaut$ is defined as the set of all trees accepted by~$\treeaut$.
A set of trees is regular if there exists a tree automaton recognizing it.
We define~$\card{\treeaut} = \card{Q}$.
Tree automata are closed under intersection via an adaptation of the construction for the intersection of automata on words.
Hence, for tree automata~$\treeaut_1$,$\treeaut_2$ there exists a tree automaton~$\treeaut$ with~$\card{\treeaut} \in \bigo(\card{\treeaut_1} \card{\treeaut_2})$ such that~$\lang(\treeaut) = \lang(\treeaut_1) \cap \lang(\treeaut_2)$.

\section{Stack Trees}
\label{sec:stack-trees}

Alur and Madhusudan showed how to encode words over some visibly pushdown alphabet as a tree by ``folding away'' the nested infixes of calls into subtrees, thus moving a call and its matching return next to each other in the resulting tree~\cite{AlurMadhusudan04}.
In this section, we slightly adapt their encoding in order to simplify our construction of tree automata later on in Section~\ref{sec:vldl-satisfiability}.
In that section, we construct for each \vldl formula~$\varphi$ a tree automaton that accepts precisely the encodings of words satisfying~$\varphi$.

For the remainder of this work, we fix some pushdown alphabet~$\vpdalphabet = (\calls, \returns, \locals)$ as a partition of some alphabet~$\Sigma$.
Let~$\alpha \in \Sigma^\omega$ be an infinite word and define $\Sigma_\bot = \Sigma \cup \set{\bot}$, where~$\bot$ is some fresh symbol.
Intuitively, every node in the resulting tree denotes either one position of~$\alpha$, or it is labeled with the special symbol~$\bot$.
For a given word~$\alpha \in \Sigma^\omega$, we define the function~$\StackTree$ mapping finite and infinite words over~$\Sigma$ to infinite~$\Sigma_\bot$-trees in Figure~\ref{fig:tree-encoding}.
At every matched call, we encode its matched infix and the suffix starting at and including its matched return in the right- and left-hand subtrees, respectively.
At an unmatched call, we encode the suffix of the word starting at the symbol succeeding the unmatched call in the right-hand subtree.
If the current letter is not a call, we encode the suffix starting at the current letter's successor in the left-hand subtree.
All vertices not encoding a symbol of~$\alpha$ are labeled with~$\bot$.

\begin{figure}
	\centering
	\begin{align*}
		\StackTree(c w r\alpha) = & \tikztree{$c$}{$\StackTree(r\alpha)$}{$\StackTree(w)$}
			\begin{minipage}{.25\textwidth}
				if $c \in \calls$ and $r$ \\ is
				matching \\ return of~$c$
			\end{minipage} &
		\StackTree(c\alpha) = & \tikztree{$c$}{$\StackTree(\epsilon)$}{$\StackTree(\alpha)$}
			\begin{minipage}{.17\textwidth}
				if $c \in \calls$ and $c$ \\
				is unmatched
			\end{minipage} \\
		\StackTree(x\alpha) = & \tikztree{$x$}{$\StackTree(\alpha)$}{$\StackTree(\epsilon)$}
			\begin{minipage}{.25\textwidth}
				if $x \in \locals \cup \returns$
			\end{minipage} &
		\StackTree(\epsilon) = & \tikztree{$\bot$}{$\StackTree(\epsilon)$}{$\StackTree(\epsilon)$}
	\end{align*}
	\caption{Definition of~$\StackTree\colon \Sigma^\omega \cup \Sigma^* \rightarrow T_{\Sigma_\bot}$.}
	\label{fig:tree-encoding}
\end{figure}

A tree~$t$ is a stack tree if~$t = \StackTree(\alpha)$ for some~$\alpha \in \Sigma^\omega$.
We define the set of all stack trees over~$\Sigma$ as~$\StackTree(\Sigma^\omega) = \set{ t(\alpha) \mid \alpha \in \Sigma^\omega}$.

\begin{theorem}
	\label{thm:stack-tree-regularity}
	The set~$\StackTree(\Sigma^\omega)$ is regular.
\end{theorem}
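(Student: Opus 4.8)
The plan is to construct a nondeterministic tree automaton $\treeaut_{\StackTree}$ over $\Sigma_\bot$ with $L(\treeaut_{\StackTree}) = \StackTree(\Sigma^\omega)$. The idea is that the subtree rooted at any node of a stack tree plays one of only finitely many roles, which $\treeaut_{\StackTree}$ records in its state: the subtree is the all-$\bot$ tree (state~$q_\bot$); it is $\StackTree(\gamma)$ for an arbitrary infinite word~$\gamma$ (state~$q_\infty$, the initial state); it is $\StackTree(\gamma)$ for an infinite~$\gamma$ \emph{without unmatched returns} (state~$q_{\mathrm{wm}}$); it is $\StackTree(w)$ for a finite well-matched word~$w$, i.e.\ a nested infix (state~$q_{\mathrm{fin}}$); or the node carries the matching return of the call at its parent, in which case the state additionally remembers---as $q_{\mathrm{ret},\infty}$, $q_{\mathrm{ret},\mathrm{wm}}$ or $q_{\mathrm{ret},\mathrm{fin}}$---which of the first three roles the part of the word after that return plays. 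The transitions simply transcribe the four equations of Figure~\ref{fig:tree-encoding}: from $q_\infty$, on a letter in $\locals \cup \returns$ the automaton sends $q_\infty$ to the left and $q_\bot$ to the right child, while on a call it guesses nondeterministically whether the call is unmatched---sending $q_\bot$ left and $q_{\mathrm{wm}}$ right---or matched---sending $q_{\mathrm{ret},\infty}$ left and $q_{\mathrm{fin}}$ right; a state $q_{\mathrm{ret},x}$ insists on reading a return and then sends $q_x$ left and $q_\bot$ right; and $q_{\mathrm{wm}}$ and $q_{\mathrm{fin}}$ act like $q_\infty$ except that both refuse a leading return, that $q_{\mathrm{fin}}$ also accepts~$\bot$ (the case $w = \epsilon$), and that $q_{\mathrm{fin}}$ always treats a call as matched (a well-matched word has no unmatched call). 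That $q_{\mathrm{wm}}$ and $q_{\mathrm{fin}}$ refuse a leading return is the crucial local constraint: it forbids the right subtree of a node guessed to be an unmatched call from encoding a word that would in fact furnish a matching return for that call.

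The transitions already force the encoded word to be infinite, since the ``spine'' of a run---the branch that follows the word, turning left everywhere except at a call guessed unmatched, where it turns right---runs only through states $q_\infty$, $q_{\mathrm{wm}}$ and $q_{\mathrm{ret},x}$, all of which reject~$\bot$. The Büchi condition serves to exclude the one remaining malformed shape, an ``infinite nested infix''. Declaring exactly $q_\bot$, $q_\infty$ and $q_{\mathrm{wm}}$ accepting, a branch that follows the spine meets $q_\infty$ or $q_{\mathrm{wm}}$ infinitely often, because each $q_{\mathrm{ret},x}$ hands control back to $q_x$ one step later; a branch that leaves the spine either enters an all-$\bot$ subtree, where it reaches~$q_\bot$ and stays, or enters the encoding of a nested infix, where the run never leaves $\{q_{\mathrm{fin}}, q_{\mathrm{ret},\mathrm{fin}}, q_\bot\}$ and is accepting precisely if it eventually reaches~$q_\bot$, i.e.\ precisely if that infix is finite. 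A branch that remains in $\{q_{\mathrm{fin}}, q_{\mathrm{ret},\mathrm{fin}}\}$ forever---an ``infinite nested infix''---is thus rejected.

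It remains to verify $L(\treeaut_{\StackTree}) = \StackTree(\Sigma^\omega)$. For the inclusion $\StackTree(\Sigma^\omega) \subseteq L(\treeaut_{\StackTree})$, given $\alpha \in \Sigma^\omega$ one labels each node of $\StackTree(\alpha)$ by the state matching the role its subtree plays in the recursive unfolding of Figure~\ref{fig:tree-encoding}; the case distinction there shows that this is a run, and it is accepting because every nested infix of~$\alpha$ is finite and hence contributes only finitely many non-$\bot$ nodes to $\StackTree(\alpha)$. For the converse, from an accepting run~$r$ on a tree~$t$ one first observes that on any subtree where~$r$ takes the value~$q_{\mathrm{fin}}$ the run never leaves $\{q_{\mathrm{fin}}, q_{\mathrm{ret},\mathrm{fin}}, q_\bot\}$, so acceptance forces every branch there to reach~$q_\bot$, and K\"onig's Lemma then makes the non-$\bot$ part of that subtree finite; one next reads a word~$\alpha$ off~$t$ by a corecursion dual to~$\StackTree$ that, at a matched-call spine node, first descends into the right subtree to read the now provably finite nested infix and then resumes at the left child, and one checks $\StackTree(\alpha) = t$ by induction on this corecursion. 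I expect the bulk of the work to lie in this converse direction: in the bookkeeping that connects the state~$q_{\mathrm{wm}}$ and the finiteness of the $q_{\mathrm{fin}}$-subtrees to the facts that the word read off is infinite and is re-encoded to exactly~$t$, and in checking that matched and unmatched calls are decoded consistently with the way~$\StackTree$ encodes them. The transition design and the forward inclusion are, by contrast, routine once the handful of roles above has been pinned down; note also that $\treeaut_{\StackTree}$ has a fixed number of states, independent of~$\vpdalphabet$.
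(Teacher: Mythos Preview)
Your proposal is correct, and the underlying idea coincides with the paper's: identify a finite set of ``roles'' a subtree can play (all-$\bot$; on the cardinal branch before/after the first unmatched call; encoding a finite nested infix; encoding a matching return), enforce the local shape constraints dictated by Figure~\ref{fig:tree-encoding}, and use the B{\"u}chi condition to rule out infinite nested infixes. The converse direction---reading a word off an accepted tree via a right-first traversal, using K{\"o}nig to bound the $q_{\mathrm{fin}}$-subtrees---is exactly what the paper does as well, only phrased there as an explicit induction on the height of finite subtrees followed by an explicit construction of a prefix sequence $w_0, w_1, w_2, \ldots$ along the infinite branch.

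The one organisational difference is that the paper does not build a single monolithic automaton. Instead it states a handful of structural conditions (children of $\bot$ are $\bot$; right subtree of a matched call is finite and well-matched; left child of an unmatched call is $\bot$ and its right subtree has no unmatched returns; right child of a local or return is $\bot$; the root is non-$\bot$; exactly one branch is infinite), remarks that each is recognisable by a tree automaton, and appeals to closure under intersection. Your seven states $q_\bot, q_\infty, q_{\mathrm{wm}}, q_{\mathrm{fin}}, q_{\mathrm{ret},\infty}, q_{\mathrm{ret},\mathrm{wm}}, q_{\mathrm{ret},\mathrm{fin}}$ are essentially the product of those component automata written out by hand. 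The paper's decomposition buys a cleaner statement of what is being checked and lets it skip the explicit transition table; your direct construction buys an explicit constant-size automaton and makes the role of the B{\"u}chi set more transparent. Either presentation is fine, and the correctness arguments are interchangeable.
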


\begin{proof}
We first introduce some notation.
Let~$t$ be a~$\Sigma_\bot$-tree.
We say that a node~$b$ is a matched call if~$t(b) \in \calls$ and~$t(b0) \in \returns$.
Similarly,~$b0$ is a matched return if we have~$t(b) \in \calls$.
If all calls and returns in~$t$ are matched, we say that~$t$ is well-matched.
Furthermore, we call a branch~$\beta$ finite in~$t$ if it eventually only contains~$\bot$-labeled vertices.
Otherwise, we call~$\beta$ infinite in~$t$.
Finally, we call a tree finite if all of its branches are finite.

We claim that a~$\Sigma_\bot$-tree~$t$ is a stack tree if and only if $t(\epsilon) \neq \bot$, if there exists a single branch that is infinite in~$t$, and if the following properties hold true for all~$b \in \bools^*$:
\begin{enumerate}[nolistsep]
	\item \label{cond:bottom} If~$t(b) = \bot$, then~$t(b0) = t(b1) = \bot$,
	\item \label{cond:matched-call} if~$t(b) \in \calls$ and~$b$ is matched, then~$\restr{t}{b1}$ is finite and well-matched,
	\item \label{cond:unmatched-call} if~$t(b) \in \calls$ and~$b$ is unmatched, then~$t(b0) = \bot$ and~$\restr{t}{b1}$ contains no unmatched returns, and
	\item \label{cond:locals-returns} if~$t(b) \in \locals \cup \returns$, then~$t(b1) = \bot$.
\end{enumerate}

Note that each of these properties can be checked by a tree automaton.
As tree automata are closed under intersection, there also exists a single tree automaton that checks all of the above properties.

It remains to show that the conditions above indeed characterize stack trees, i.e., that a~$\Sigma_\bot$-tree is a stack tree if and only if it satisfies the conditions above.
First note that for all~$\alpha \in \Sigma^\omega$, the tree~$\StackTree(\alpha)$ clearly satisfies the above conditions.
Hence, we now show that for each tree~$t$ satisfying these conditions there exists a word~$\alpha \in \Sigma^\omega$ such that~$t = \StackTree(\alpha)$.
We construct such a word via a preorder traversal of~$t$ that visits right-hand children before left-hand ones.	

We first show how to encode finite-trees, as such trees encode nested infixes of matched calls.
Let~$t$ be a finite~$\Sigma_\bot$-tree satisfying conditions~$1$ through~$4$ and let~$\height(t)$ be the minimal~$k$ such that for all nodes~$b$ with~$\card{b} \geq k$ we have~$t(b) = \bot$ and.
We construct a word~$w \in \Sigma^*$ such that~$t = \StackTree(w)$ by induction over~$\height(t)$.
If~$\height(t) = 0$, then~$t(b) = \bot$ for all~$b \in \bools^*$ due to Condition~\ref{cond:bottom} and thus, $t = \StackTree(\epsilon)$.
If, however,~$\height(t) > 0$, then first note that~$t(\epsilon) \notin \returns$, since~$\epsilon$ would be an unmatched return in that case.
Thus, first assume~$t(\epsilon) = l \in \locals$.
Then~$t(1) = \bot$ and~$t' = \restr{t}{0}$ is a well-matched~$\Sigma_\bot$-tree with~$\height(t') < \height(t)$.
Hence, there exists a word~$w' \in \Sigma^*$ such that~$\StackTree(w') = t'$.
Thus, we pick~$w = lw'$ and obtain~$t = \StackTree(w)$.
Now assume~$t(\epsilon) = c \in \calls$.
As every call in~$t$ is matched, we obtain~$t(0) = r \in \returns$ and thus,~$\restr{t}{01}(b) = \bot$ for all $b \in \bools^*$, while there exist words~$w_1$ and~$w_{00}$ such that~$\StackTree(w_1) = \restr{t}{1}$ and~$\StackTree(w_{00}) = \restr{t}{00}$ due to the induction hypothesis and due to the second condition given above.
Hence, we pick~$w = c w_{1} r w_{00}$ and obtain~$t = \StackTree(w)$.

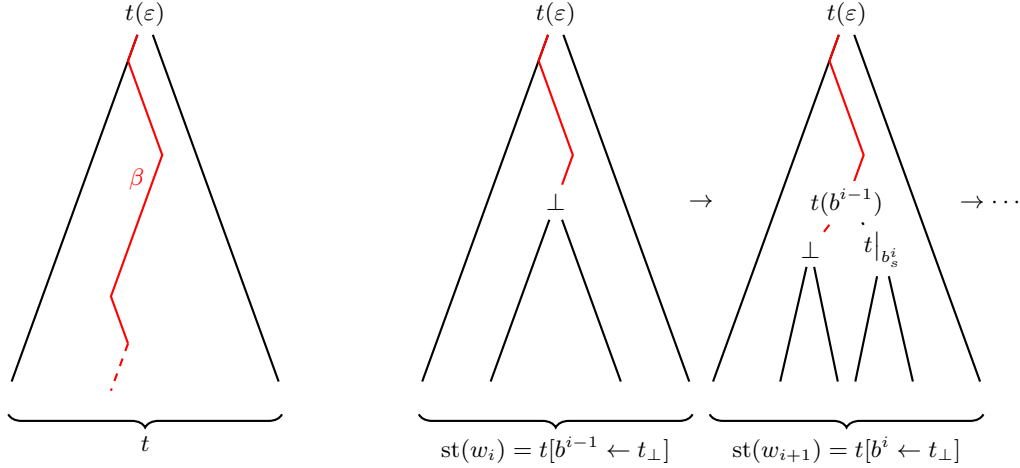
\begin{figure}
	\centering
	\begin{tikzpicture}[thick,yscale=1.25,xscale=.9]
		\begin{scope}
			\node (root) at (0,0) {$t(\epsilon)$};
			\node (bottom-left) at (-2,-4) {};
			\node (bottom-right) at (2,-4) {};
			
			\draw
				(root) -- (bottom-left)
				(root) -- (bottom-right);
				
			\draw[red]
				(root) -- (-.25,-.5) -- (0,-1) -- (.25,-1.5) -- node[anchor=east] {$\beta$} (0,-2) -- (-.25,-2.5) -- (-.5,-3) -- (-.25,-3.5);
			\draw[red,dashed] (-.25,-3.5) -- (-.5,-4);
				
			\draw[decorate,decoration={brace,amplitude=5pt,mirror}]
				(-2,-4.25) -- node[anchor=north,transform canvas={yshift=-5pt}] {$t$} (2,-4.25);
		\end{scope}

		\begin{scope}[xshift=6cm]
			\node (root) at (0,0) {$t(\epsilon)$};
			\node (bottom-left) at (-2,-4) {};
			\node (bottom-right) at (2,-4) {};
			
			\node (inner-root) at (0,-2) {$\bot$};
			\node (bottom-left-inner) at (-1,-4) {};
			\node (bottom-right-inner) at (1,-4) {};
	
			\draw
				(root) -- (bottom-left)
				(root) -- (bottom-right)
				(inner-root) -- (bottom-left-inner)
				(inner-root) -- (bottom-right-inner);
				
			\draw[red]
				(root) -- (-.25,-.5) -- (0,-1) -- (.25,-1.5) -- (inner-root);
				
			\draw[decorate,decoration={brace,amplitude=5pt,mirror}]
				(-2,-4.25) -- node[anchor=north,transform canvas={yshift=-5pt}] {$\StackTree(w_i) = t[b^{i-1} \leftarrow t_\bot] $} (2,-4.25);
		\end{scope}
		
		\node at (8.125,-2) {$\rightarrow$};
		
		\begin{scope}[xshift=10.25cm]
			\node (root) at (0,0) {$t(\epsilon)$};
			\node (bottom-left) at (-2,-4) {};
			\node (bottom-right) at (2,-4) {};
			
			\draw
				(root) -- (bottom-left)
				(root) -- (bottom-right);
			
			\draw[red]
				(root) -- (-.25,-.5) -- (0,-1) -- (.25,-1.5) -- (0,-2);
			
			\node[fill=white] (new-root) at (0,-2) {$t(b^{i-1})$};
			\node[anchor=east] (new-root-lhc) at (-.25,-2.5) {$\bot$};
			\node[xshift=.25cm] (new-root-rhc) at (.25,-2.5) {$\restr{t}{b^i_s}$};
			\node (bottom-left-inner) at (-1,-4) {};
			\node (bottom-left-middle) at (-.1,-4) {};
			\node (bottom-right-middle) at (.1,-4) {};
			\node (bottom-right-inner) at (1,-4) {};
			
			\draw
				(new-root) -- (new-root-rhc)
				(new-root-lhc) -- (bottom-left-inner)
				(new-root-lhc) -- (bottom-left-middle)
				(new-root-rhc) -- (bottom-right-middle)
				(new-root-rhc) -- (bottom-right-inner);
			
			\draw[red]
				(new-root) -- (new-root-lhc);

			\draw[decorate,decoration={brace,amplitude=5pt,mirror}]
				(-2,-4.25) -- node[anchor=north,transform canvas={yshift=-5pt}] {$\StackTree(w_{i+1}) = t[b^i \leftarrow t_\bot] $} (2,-4.25);
		\end{scope}
		
		\node at (12.375,-2) {$\rightarrow \cdots$};
		
		\node at (0,-4.5) {};
		
	\end{tikzpicture}
	\caption{Construction of the~$\StackTree(w_i)$. We have~$\beta = b_0b_1b_2\cdots$ and use the shorthands~$b^i = b_0\cdots b_i$, and~$b_s^i = b_0 \cdots b_{i-1} (1-b_i)$.}
	\label{fig:construction-illustration}
\end{figure}

Now let~$t$ be a tree that satisfies conditions~$1$ through~$4$ with~$t(\epsilon) \neq \bot$ and let~$\beta = b_0b_1b_2\cdots$ be the single infinite branch of~$t$.
As a shorthand, let~$b^i = b_0\cdots b_i$ and let~$t_a$ be the unique~$\Sigma_\bot$-tree with~$t_a(\epsilon) = a$ and~$t_a(b) = \bot$ for all~$b \in \bools^+$.
Moreover, if~$t$ and~$t'$ are~$\Sigma_\bot$-trees and~$b \in \bools^*$, we define $t[b \replacedby t']$ such that $t[b \replacedby t'](b') = t'(b'')$ if~$b' = bb''$ for some~$b'' \in \bools^*$, and $t[b \replacedby t'] = t(b)$ otherwise.
Intuitively, we replace the subtree of~$t$ anchored at~$b$ by the tree~$t'$.
We construct a series of words~$w_0,w_1,w_2,\dots \in \Sigma^*$ such that for each~$w_i$, 
\begin{enumerate}[nolistsep]
	\item $w_i$ is a strict prefix of~$w_{i+1}$
	\item $\StackTree(w_i) = t[b^{i-1} \replacedby t_\bot]$, and
	\item $\StackTree(w_i \cdot a) = t[b^{i-1} \replacedby t_a]$, where $a = t(b^{i-1})$.
\end{enumerate}
We illustrate this construction in Figure~\ref{fig:construction-illustration}.
Due to the first condition, the limit of the~$w_i$ for~$i \rightarrow \infty$ is an~$\omega$-word~$\alpha$, which, due to the second condition, satisfies~$\StackTree(\alpha) = t$.
The final condition allows us to construct the~$w_i$ inductively by collecting the labels of the nodes along the infinite path~$\beta$ of~$t$:
Upon encountering a matched return~$r$ we are able to append~$r$ to the~$w_i$ constructed so far and and ensure that the resulting~$w_{i+1}$ indeed satisfies the second condition.

Formally, we first pick~$w_0 = \epsilon$, which obviously satisfies the above requirements.
Now let~$i \in \nats$ such that~$w_i$ is defined and satisfies the above requirements.
In order to construct~$w_{i+1}$, let~$a = t(b^{i-1})$.

If~$a \in \calls$ and~$b^{i-1}$ is matched, then the subtree rooted at~$b^{i-1}1$ is finite and well-matched due to Condition~\ref{cond:matched-call}, hence there exists a word~$w \in \Sigma^*$ such that~$\StackTree(w) = \restr{t}{b^{i-1}1}$ as shown above.
Thus, it is easy to verify that~$w_{i+1} = w_i a w$ satisfies the requirements above.
In particular the third requirement is satisfied due to~$b_i = 0$, $t(b^i) \in \returns$, and due to the fact that~$w$ is well-matched, i.e., it does not contain unmatched calls.

If~$a \in \locals \cup \returns$, however, then~$w_{i+1} = w_i a$ clearly satisfies the conditions above.
In particular, the third condition is satisfied due to~$b^{i-1}$ being part of the unique infinite branch~$\beta$.
Hence,~$t(b^i) \in \returns$ can only hold true if there exists no unmatched call on the path to~$b^i$.
Thus, the third condition is indeed satisfied.
\qed
\end{proof}

From the proof of Theorem~\ref{thm:stack-tree-regularity} we furthermore obtain that for each~$\alpha \in \Sigma^\omega$, there exists exactly one branch~$\beta = b_0b_1b_2\cdots$ such that~$\StackTree(\alpha)(b_0\cdots b_{i-1}) \neq \bot$ for each~$i \in \nats$.
We call~$\beta$ the cardinal branch of~$\StackTree(\alpha)$ and we call the positions of the symbols encoded along~$\beta$ the cardinal positions of~$\alpha$.

We give an example of~$\StackTree(\alpha)$ for~$\alpha = lclrcl\cdots$ over the alphabet~$\vpdalphabet = (\calls, \returns, \locals) = (\set{c}, \set{r}, \set{l})$ in Figure~\ref{fig:encoding-example}.
The positions~$0,1,3,4$ are cardinal positions, if we assume the second~$c$ in~$\alpha$ to be unmatched.
Recall that we defined~$\sh(w)$ to be the stack height reached by any visibly pushdown automaton after processing~$w \in \Sigma^*$.
In general, Löding et al.~\cite{LoedingMadhusudanSerre04} defined the steps of a word~$\alpha = \alpha_0\alpha_1\alpha_2\cdots$ as those positions of~$\alpha$ that reach a lower bound on the stack height reached during processing the remainder of the word, i.e.,~$\steps(\alpha) = \set{k \mid \forall k' \geq k.\, \sh(\alpha_0\cdots \alpha_k) \leq \sh(\alpha_0 \cdots \alpha_{k'}}$.
A position~$k$ is a cardinal position of~$\alpha$ if and only if it is either a steps, or if~$\alpha_k$ is the matching return of some call occurring at a step.

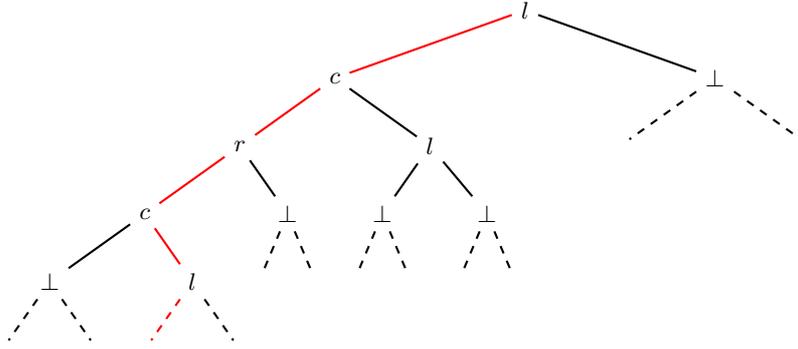
\begin{figure}
\centering
\begin{tikzpicture}[thick,xscale=2.5,yscale=.9]
	\node (root) at (0,0) {$l$};
	
	\node (0) at (-1,-1) {$c$};
	\node (1) at (1,-1) {$\bot$};
	
	\node (00) at (-1.5,-2) {$r$};
	\node (01) at (-.5,-2) {$l$};
	\node (10) at (.5,-2) {};
	\node (11) at (1.5,-2) {};
	
	\node (000) at (-2,-3) {$c$};
	\node (001) at (-1.25,-3) {$\bot$};
	\node (010) at (-.75,-3) {$\bot$};
	\node (011) at (-.2,-3) {$\bot$};
	
	\node (0000) at (-2.5,-4) {$\bot$};
	\node (0001) at (-1.75,-4) {$l$};
	\node (0010) at (-1.4,-4) {};
	\node (0011) at (-1.1,-4) {};
	
	\node (0100) at (-.9,-4) {};
	\node (0101) at (-.6,-4) {};
	\node (0110) at (-.35,-4) {};
	\node (0111) at (-.05,-4) {};
	
	\node (00000) at (-2.75,-5) {};
	\node (00001) at (-2.25,-5) {};
	\node (00010) at (-2,-5) {};
	\node (00011) at (-1.5,-5) {};
	
	\path[draw]
		(root) edge[red] (0) edge (1)
		(0) edge[red] (00) edge (01)
		(1) edge[dashed] (10) edge[dashed] (11)
		(00) edge[red] (000) edge (001)
		(01) edge (010) edge (011)
		(000) edge (0000) edge[red] (0001)
		(001) edge[dashed] (0010) edge[dashed] (0011)
		(010) edge[dashed] (0100) edge[dashed] (0101)
		(011) edge[dashed] (0110) edge[dashed] (0111)
		(0000) edge[dashed] (00000) edge[dashed] (00001)
		(0001) edge[red,dashed] (00010) edge[dashed] (00011);
\end{tikzpicture}
\caption{Encoding of~$\alpha = lclrcl\cdots$, where the second occurrence of~$c$ is an unmatched call.
The cardinal branch of~$\StackTree(\alpha)$ is marked in red.}
\label{fig:encoding-example}
\end{figure}

\section{Reducing \vldl Satisfiability to Tree Automata Emptiness}
\label{sec:vldl-satisfiability}

In this section we reduce the problem of \vldl satisfiability to the emptiness problem for tree automata.
The former problem is formulated as follows: ``Given some \vldl formula~$\varphi$, is~$\varphi$ satisfiable?''
We formalize the reduction of this problem to the emptiness problem for tree automata as follows:

\begin{theorem}
\label{thm:vldl-to-tree-automata}
For every \vldl formula~$\varphi$ there exists an effectively constructible tree automaton~$\treeaut$ such that~$\lang(\treeaut) = \StackTree(\lang(\varphi))$ with~$\card{\treeaut} \in \bigo(2^\card{\varphi})$.
\end{theorem}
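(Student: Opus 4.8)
The plan is to follow the standard pipeline for translating temporal logics into automata, but carried out over stack trees rather than over words. First, I would recall from~\cite{WeinertZimmermann16a,bozzelli07} that every \vldl formula~$\varphi$ can be translated into a \oneaja~$\aut_\varphi$ of size linear in~$\card{\varphi}$ that accepts exactly~$\lang(\varphi)$; this automaton has no stack but may jump from a matched call directly to its matching return, which is precisely the kind of move that becomes a \emph{local} left/right navigation once the word is folded into its stack tree. The proof then splits into three parts: (i) reinterpret the \oneaja~$\aut_\varphi$ as an alternating \emph{tree} automaton~$\treeaut_\varphi^{\mathrm{alt}}$ running on stack trees, so that~$\aut_\varphi$ accepts~$\alpha$ iff~$\treeaut_\varphi^{\mathrm{alt}}$ accepts~$\StackTree(\alpha)$; (ii) remove alternation via the breakpoint construction of Miyano and Hayashi~\cite{MiyanoHayashi84}, adapted to trees and to Büchi acceptance, incurring a single exponential blow-up and yielding a nondeterministic~$\treeaut_\varphi^{\mathrm{nd}}$; and (iii) intersect with the tree automaton from Theorem~\ref{thm:stack-tree-regularity} recognizing~$\StackTree(\Sigma^\omega)$, so that the final~$\treeaut$ only accepts genuine stack trees and hence~$\lang(\treeaut) = \StackTree(\lang(\varphi))$. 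Since the stack-tree automaton has constant size and intersection multiplies sizes, the bound~$\card{\treeaut} \in \bigo(2^{\card{\varphi}})$ follows from the blow-up in step~(ii).

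The heart of the argument is step~(i): translating the movement of the \oneaja into navigation on the stack tree. On an input word, the \oneaja at a position~$k$ either advances to~$k+1$ (a \emph{local} move) or, if~$\alpha_k$ is a matched call, jumps to the matching return. I would exploit the defining equations of~$\StackTree$ in Figure~\ref{fig:tree-encoding}: at a node~$b$ labeled with a local action or a return, the "next position" of the word sits at~$b0$; at a matched call~$b$, the matching return sits at~$b0$ as well, while the nested infix is rooted at~$b1$; at an unmatched call, the successor is at~$b1$. Thus the word-successor and the jump-successor both live among the children of~$b$, and the \oneaja's transition function can be rewritten as a tree-automaton transition function that sends copies of states down to~$b0$ and/or~$b1$, routing~$\bot$-states (in a trivial accepting sink) down the remaining subtrees. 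The tests~$t(q)$ attached to states of a \tnvpa are handled recursively, exactly as in the word construction, by launching the corresponding sub-automata; one has to check that the recursion terminates because the subformula relation is required to be noncircular. The acceptance condition of the \oneaja (which is Büchi on the run structure) transfers to a Büchi condition on the branches of the tree run, using the fact that a stack tree has exactly one infinite branch, its cardinal branch, along which the "real" computation proceeds while all off-branch subtrees are finite and can be accepted by bookkeeping states.

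The main obstacle I anticipate is the correctness of step~(i) in the presence of the fold: one must verify carefully that an accepting run of the \oneaja on~$\alpha$ corresponds, node for node, to an accepting run of~$\treeaut_\varphi^{\mathrm{alt}}$ on~$\StackTree(\alpha)$ and conversely, including the subtle point that the tree automaton sees a \emph{return} symbol~$r$ at node~$b0$ before it "knows" which call it matched — but this is exactly why the return is stored as the left child of its call, so the information is available from the parent's state. A secondary technical point is that, a priori, a tree automaton run might place meaningful states inside subtrees that are supposed to be all-$\bot$; intersecting with the stack-tree recognizer of Theorem~\ref{thm:stack-tree-regularity} (conditions~\ref{cond:bottom}--\ref{cond:locals-returns}) rules this out, which is the reason step~(iii) is needed not merely for well-formedness of the input but for the semantic equality~$\lang(\treeaut) = \StackTree(\lang(\varphi))$. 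Finally, for the size bound one observes that~$\treeaut_\varphi^{\mathrm{alt}}$ has~$\bigo(\card{\varphi})$ states (summing over all sub-automata and all subformulas, using the definition of~$\card{\varphi}$), the breakpoint construction squares this into subsets and yields~$2^{\bigo(\card{\varphi})}$ states, and the constant-size intersection in step~(iii) preserves this bound.
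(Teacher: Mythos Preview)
Your overall pipeline matches the paper's decomposition (Lemma~\ref{lem:vldl-to-oneaja} followed by Lemma~\ref{lem:oneaja-to-tree-automata}, then intersection with~$\treeaut_\Sigma$), but your step~(i) has a genuine gap. You assert that the \oneaja's moves become alternating-tree-automaton transitions because ``the word-successor and the jump-successor both live among the children of~$b$.'' That is only true for the \emph{immediate} transition out of a matched call. A \oneaja copy that takes a \emph{direct} transition at a matched call~$\alpha_k$ enters the nested infix and, after processing it, emerges at the matching return~$\alpha_{k'}$. In~$\StackTree(\alpha)$ the nested infix is the \emph{finite} subtree rooted at~$b1$, whereas the matching return sits at~$b0$, a sibling of~$b1$ and not a descendant of any node in that subtree. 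A one-way alternating tree automaton cannot route a copy from the~$\bot$-leaves of~$\restr{t}{b1}$ over to~$b0$; if those copies simply enter an accepting sink (as your ``routing~$\bot$-states'' remark suggests), you lose the constraint that they must continue to satisfy the B\"uchi condition \emph{after} emerging, and the resulting automaton accepts too much.

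The paper does not build an intermediate alternating tree automaton. Instead it folds a guess-and-verify summary directly into the nondeterministic breakpoint construction: at each matched call the automaton guesses the partition~$(A_G,N_G)$ of \oneaja states reached after the nested infix, sends the state~$((A'_\direct,N'_\direct),(A_G,N_G))$ into the right subtree to verify the guess (accepting at the~$\bot$-leaf only when the current partition equals~$(A_G,N_G)$), and sends the jump-targets merged with~$(A_G,N_G)$ into the left subtree to continue along the cardinal branch. Correctness of the breakpoint part is isolated as Lemma~\ref{lem:breakpoint-correctness}, which exploits that cardinal positions are precisely the levels no \oneaja copy can jump over. You can repair your factored approach by annotating each copy entering a nested infix with a guessed exit state and checking it at the leaf, but you should identify this summary mechanism as the crux rather than a routine rewriting. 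A minor correction: the \oneaja of Lemma~\ref{lem:vldl-to-oneaja} is polynomial, not linear, in~$\card{\varphi}$, since the~$\bbox{\aut}$ case introduces states in~$Q^\aut \times Q^\aut \times \Gamma$.
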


Due to Theorem~\ref{thm:vldl-to-tree-automata}, we obtain an algorithm that checks \vldl formulas for satisfiability by first transforming a given formula~$\varphi$ into the tree automata~$\treeaut$ recognizing~$\StackTree(\lang(\varphi))$ and subsequently checking~$\lang(\treeaut)$ for emptiness.
Since tree automata can be checked for emptiness in polynomial time~\cite{KupfermanVardi98,ChatterjeeHenzinger12}, this algorithm runs in exponential time in~$\card{\varphi}$.
As the problem of deciding \vldl satisfiability is \exptime-hard~\cite{WeinertZimmermann16a}, the obtained algorithm is asymptotically optimal.

We split the proof of Theorem~\ref{thm:vldl-to-tree-automata} into two parts:
First, we transform a given \vldl formula into an equivalent so-called \oneaja~\cite{bozzelli07} of polynomial size.
A \oneaja is an alternating finite-state automaton on words that is able to \myquot{jump} from calls to their matching return, skipping the nested infix.
We describe this construction in the proof of Lemma~\ref{lem:vldl-to-oneaja}.
In a second step, we transform the obtained \oneaja into a tree automaton of exponential size that recognizes the stack trees of words recognized by the \oneaja.
We define this construction in Lemma~\ref{lem:oneaja-to-tree-automata}.

Let us first define the above mentioned \oneaja~\cite{bozzelli07}.
First, let~$\dirs = \set{\direct, \jump}$.
In general, we use~$\direction$ to indicate an arbitrary member of~$\dirs$.
Moreover, for a finite set~$Q$ and $\direction \in \dirs$, let~$\comms_\direction(Q) = \set{\direction} \times Q \times Q$, let~$\comms(Q) = \comms_\direct(Q) \cup \comms_\jump(Q)$, and let $\bplus( \comms(Q) )$ be the set of positive Boolean formulas over $\comms(Q)$.
Note that $\bplus( \comms(Q) )$ does not include the shorthands $\mathit{true}$ nor~$\mathit{false}$.
A 1-AJA (with Büchi acceptance) $\aut = (Q, \vpdalphabet, \delta, q_\initmark, Q_F)$ consists of
	a finite set of states~$Q$,
	a visibly pushdown alphabet~$\vpdalphabet$,
	a transition function~$\delta\colon Q \times \Sigma \rightarrow \bplus( \comms(Q) )$,
	an initial state $q_\initmark \in Q$,
	and a set of accepting states $Q_F \subseteq Q$.
We define $\card{\aut} = \card{Q}$.

Intuitively, when the automaton is in state $q$ at position $i$ of the word $\alpha = \alpha_0\alpha_1\alpha_2\cdots$, it guesses a set of commands $C \subseteq \comms(Q)$ such that $C \models \delta(q, \alpha_i)$.
It then spawns one copy of itself for each command $(\direction, q_\direct, q_\jump) \in C$ and executes the command with that copy.
If $\direction = \jump$ and if $\alpha_i$ is a matched call, the copy jumps to the position of the matching return of $\alpha_i$ and transitions to state $q_\jump$.
Otherwise, i.e., if $\direction = \rightarrow$, or if~$\alpha_i$ is not a matched call, the automaton advances to position $i+1$ and transitions to state $q_\direct$.
We say that~$\aut$ takes a jumping (direct) transition in the former (latter) case.
All copies of~$\aut$ proceed in parallel.
A single copy of~$\aut$ accepts if it visits accepting states infinitely often, while the \oneaja accepts~$\alpha$ if all of its copies accept.

Formally, a run of~$\aut$ on an infinite word $\alpha = \alpha_0\alpha_1\alpha_2\cdots$ is an infinite directed acyclic graph $R = (V, E)$ with $V \subseteq \nats \times Q$, where~$v_\initmark = (0, q_\initmark) \in V$ and all~$v \in V$ are reachable from~$v_\initmark$.
We call~$v_\initmark$ the initial vertex of~$R$ and say that a vertex~$(i, q) \in V$ is on level~$i$ of~$R$.
We require that for each $(i,q) \in V$, there exists some $C \subseteq \comms(Q)$ such that $C \models \delta(\alpha_i, q)$ and such that $((i,q), (i',q')) \in E$ if and only if $(i',q') = \app(i, c)$ for some~$c \in C$.
To this end, the command-application function~$\app$ is defined as
	$\app(i, (\direction, q_\direct, q_\jump)) = (j, q_\jump)$ if~$\direction = \jump$ and~$\alpha_i$ is a matched call with $\alpha_j$ as its matching return,
	and $\app(i, (\direction, q_\direct, q_\jump)) = (i + 1, q_\direct)$ otherwise.
We say that a vertex~$(i, q)$ is accepting if~$q$ is accepting.
Furthermore, a run~$R$ is accepting if each vertex in~$R$ has at least one successor and if all infinite paths through~$R$ starting in~$v_\initmark$ contain infinitely many accepting vertices.

Note that, in contrast to the classical definition of runs of alternating automata without jumping capability, an edge in the run of a \oneaja does not characterize an advance by a single symbol.
Instead, there exists ``long'' edges that characterize the automaton ``jumping over'' a nested infix.
Thus, there may exist positions~$k$ such that a run of a \oneaja on a word does not contain any vertices of the form~$(k, q)$, since all copies of the automaton jump over position~$k$.
The cardinal positions of a word~$\alpha$, however, serve as synchronization points of a run on~$\alpha$, as no copy of the automaton is able to jump over the cardinal points.

\begin{lemma}
\label{lem:vldl-to-oneaja}
For every \vldl formula~$\varphi$ there exists an effectively constructible \oneaja~$\aut$ with~$\lang(\aut) = \lang(\varphi)$ and with~$\card{\aut} \in \bigo(p(\card{\varphi}))$ for some polynomial~$p$.
\end{lemma}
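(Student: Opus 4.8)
The plan is to proceed by structural induction on the \vldl formula $\varphi$, building a \oneaja whose states correspond to subformulas in $\cl(\varphi)$, in the spirit of the classical Fischer--Ladner / Vardi--Wolper construction for \ltl and \pldl, adapted to account for the jumping capability needed to handle the automata guarding the temporal operators. The state set will essentially be $\cl(\varphi)$ (plus a handful of auxiliary states per guard automaton), the initial state will be $\varphi$ itself, and the transition function $\delta(\psi, a)$ for a letter $a \in \Sigma = 2^P$ will be defined by cases on the shape of $\psi$: for atomic propositions $p$, set $\delta(p,a) = (\direct, q_{\ttrue}, q_{\ttrue})$ if $p \in a$ and reject otherwise (using a trivial accepting sink $q_{\ttrue}$ and a rejecting sink); for Boolean connectives, take the corresponding conjunction/disjunction of the commands for the immediate subformulas, all using direct moves so that the automaton stays at the same position; the interesting cases are $\ddiamond{\aut}\psi$ and $\bbox{\aut}\psi$.

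For $\ddiamond{\aut}\psi$ with $\aut = (Q^\aut, \vpdalphabet, \Gamma^\aut, \Delta^\aut, q_\initmark^\aut, Q_F^\aut, t)$, I would run a copy of $\aut$ as a finite-state automaton over the word, where reading a matched call is simulated by a jumping transition (so the nested infix is skipped exactly as $\aut$'s stack discipline would demand, since a \vpa returns to its pre-call state on a matched call) and reading locals, returns, and unmatched calls is simulated by direct transitions; crucially, reading a call $c$ must offer a nondeterministic choice between treating $c$ as matched (jump) or as unmatched (direct, and then $\aut$ must subsequently pop the stack-bottom marker). At each simulated state $q^\aut$ of $\aut$ we must additionally spawn a direct copy checking the test formula $t(q^\aut)$ at the position the word has advanced to, exactly mirroring the $\forall m.\,(\alpha,k'+m) \models t(q_m)$ clause in the semantics of $\Raut_\aut$; and when $\aut$'s copy reaches a final state $q^\aut \in Q_F^\aut$, it may nondeterministically declare the guard satisfied and hand control to the state for $\psi$ via a direct move. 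To ensure that $\aut$'s simulated run is genuinely finite (i.e., that a final state is actually reached rather than $\aut$ looping forever), all intermediate states of this simulation are made non-accepting, and — since the guard automaton is required to accept a \emph{finite} prefix — the Büchi condition will correctly reject any infinite $\aut$-only path; the dual $\bbox{\aut}\psi$ is obtained by swapping the roles of conjunction and disjunction, swapping the matched/unmatched branching from existential to universal, making the relevant intermediate states accepting so that an infinite run inside the guard (never reaching a final state) is vacuously fine, and making the hand-off to $\psi$ conjunctive over all final states. One then shows, by the standard unravelling argument relating accepting runs of $\aut$ to satisfying assignments of the semantics, that $\lang(\aut) = \lang(\varphi)$; correctness of the jump/direct distinction for guards reduces to the observation (from Section~\ref{sec:preliminaries}) that a \vpa processing a matched call together with its nested infix and matching return ends in a configuration with the same stack height it started with.

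Regarding size: the core state set is $\cl(\varphi)$, which has size $\card{\cl(\varphi)}$; each guard automaton $\aut$ occurring in $\varphi$ contributes a copy of its state set $Q^\aut$ (for the simulation), plus a constant number of auxiliary bookkeeping states per such state for the matched/unmatched branching and the test hand-off. Since $\card{\varphi}$ was defined precisely as $\card{\cl(\varphi)}$ plus the sum of the sizes of the guard automata, the total number of states of $\aut$ is linear — or at worst polynomial, absorbing the constant per-state overhead and any blow-up from the noncircularity unwinding — in $\card{\varphi}$, which gives $\card{\aut} \in \bigo(p(\card{\varphi}))$ for a suitable polynomial $p$. The transition function is effectively constructible since each $\delta(\psi, a)$ is computed from the finitely many immediate constituents of $\psi$ and the (decidable) membership test $p \in a$.

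I expect the main obstacle to be the precise treatment of calls in the simulation of a guard automaton: a call read by the word may or may not be matched, and this is not a local property, so the \oneaja must commit to a guess (matched $\Rightarrow$ jump, unmatched $\Rightarrow$ direct and henceforth treat the stack as bottomed out) and every later step of that guard copy must be consistent with the guess; getting the bookkeeping right so that a jump is only ever taken over a genuine nested infix (which the semantics of $\app$ enforces, since $\app$ only jumps at matched calls) and so that the guard automaton's stack discipline is faithfully mimicked without an explicit stack is the delicate part. A secondary subtlety is ensuring the Büchi acceptance condition interacts correctly with the finiteness requirement on guard runs in the $\ddiamond{}$ case and with the "never-accept" possibility in the $\bbox{}$ case — this is handled by the choice of which intermediate states are accepting, but it must be argued carefully against the definition of accepting runs of \oneajas, in particular that every infinite path through the run visits accepting states infinitely often.
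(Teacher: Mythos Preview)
There is a genuine gap in your treatment of the guard automata. You write that ``reading a matched call is simulated by a jumping transition \dots since a \vpa returns to its pre-call state on a matched call'', but this is false: after processing a matched call, its nested infix, and the matching return, a \vpa's stack \emph{height} (indeed its stack content) returns to the pre-call value, but its \emph{control state} in general does not. The state reached at the matching return depends on the entire run of~$\aut$ through the nested infix and on the stack symbol that was pushed at the call; a \oneaja that simply jumps over the infix has no access to that information. Consequently your simulation does not faithfully mimic~$\aut$: for $\ddiamond{\aut}\psi$ it would accept words on which~$\aut$ has no accepting run on the relevant prefix (and dually for $\bbox{\aut}\psi$). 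The observation you invoke from Section~\ref{sec:preliminaries} concerns only stack height, not the full configuration.

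The paper repairs this with a summary (guess-and-verify) scheme. At a matched call in state~$q$, the main copy guesses the state~$q''$ that~$\aut$ will occupy at the matching return and jumps there, while simultaneously spawning a verification copy in a state of the form $(q', q'', A) \in Q^\aut \times Q^\aut \times \Gamma^\aut$, where $(q,c,q',A)$ is the push transition taken. This copy processes the nested infix position by position (recursing on inner matched calls with the same mechanism) and, upon reading the matching return~$r$, checks whether the transition $(\cdot, r, A, q'')$ is available. These verification copies are also the ones that discharge the tests~$t(\cdot)$ at positions \emph{inside} the nested infix, which your plan would skip entirely. Note finally that this forces the state set to contain $Q^\aut \times Q^\aut \times \Gamma^\aut$, so the state count is polynomial rather than the linear bound you budget (``a copy of~$Q^\aut$ plus a constant number of auxiliary states per state''); this is still compatible with the statement of the lemma, but your size analysis as written does not go through.
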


\begin{proof}
In earlier work, we constructed a \oneaja with a more complicated condition from a given \vldl formula by induction over its structure~\cite{WeinertZimmermann16a}.
This more complicated condition allowed for a complementation without a state-space-blowup in the construction of an automaton equivalent to $\varphi = \neg \varphi'$.
As we are now aiming for a \oneaja with a simpler acceptance condition, namely a Büchi-condition, we adapt this previous construction.

In order to prevent the costly complementation of \oneaja, we require~$\varphi$ to be in negation normal form (NNF), i.e., we assume that negations only occur directly preceding atomic propositions.
Should this not be the case, we can easily transform~$\varphi$ into NNF by ``pushing down'' negations along the syntax tree, using De Morgan's law and the duality $\neg\ddiamond{\aut}\varphi = \bbox{\aut}\neg\varphi$.
Note that this latter duality does not require complementation of~$\aut$, hence it is applicable in constant time.
We then construct~$\aut_\varphi$ inductively over the structure of~$\varphi$.

If~$\varphi = p$, $\varphi = \neg p$, or $\varphi = \varphi_1 \circ \varphi_2$ for $\circ \in \set{\lor, \land}$, we trivially obtain $\aut_\varphi$, with $\card{\aut_p} = \card{\aut_{\neg p}} = 2$ and $\card{\aut_{\varphi_1 \circ \varphi_2}} \in \bigo(\card{\aut_{\varphi_1}} + \card{\aut_{\varphi_2}})$ due to closure of \oneaja under these operations~\cite{bozzelli07}.
If $\varphi = \ddiamond{\aut}\varphi'$, we follow the same intuition as in the previous construction~\cite{WeinertZimmermann16a}, i.e., we construct the \oneaja~$\aut_\varphi$ such that a single copy jumps along the cardinal positions of the input-word and spawns copies at every matched call in order to verify that the jumps taken correctly summarize finite runs of~$\aut$ on the nested infix.
Additionally,~$\aut_\varphi$ spawns copies verifying that the tests annotating the states along the simulated run hold true.
Finally,~$\aut_\varphi$ nondeterministically decides to transition into~$\aut_{\varphi'}$.
The complete construction for this case can be found in the full version of our previous work~\cite{WeinertZimmermann16a}, which can be adapted to use a Büchi-condition by making none of the states simulating~$\aut$ accepting in~$\aut_\varphi$, thus forcing the simulated run to eventually transition into~$\aut_{\varphi'}$.

If $\varphi = \bbox{\aut}\varphi'$, we obtain an automaton equivalent to~$\varphi$ via a dual construction to the one described above for the case $\varphi = \ddiamond{\aut}\varphi'$.
Again, let $\aut = (Q^\aut, \vpdalphabet, \Gamma^\aut, \Delta^\aut, q_\initmark^\aut, Q_F^\aut, t^\aut)$.
By induction, we obtain the \oneajas $\aut' = (Q', \vpdalphabet, \delta', q'_\initmark, Q'_F)$ equivalent to~$\varphi'$ and, for each test $\varphi_i$ occurring in~$\aut$, let $\aut_i = (Q^i, \vpdalphabet, \delta^i, q^i_\initmark, Q_F^i)$ be the \oneaja equivalent to $\neg\varphi_i$.
Recall that we first transform~$\neg\varphi_i$ into negation normal form by ``pushing down'' the negation to only occur in front of atomic propositions.

We now construct a \oneaja~$\aut_\varphi$ equivalent to~$\varphi$.
This construction is dual to our previous one for the case~$\varphi = \ddiamond{\aut}\varphi'$~\cite{WeinertZimmermann16a}.
Intuitively, we simulate all runs of~$\aut$ on the input word by spawning copies of the main automaton that jump along the cardinal positions of the processed input word.
Every time a call~$c$ is processed, we have to consider both cases of it being matched or unmatched.
If~$c$ is matched, for each state~$q \in Q^\aut$, we nondeterministically guess whether or not the automaton~$\aut$ can be in state~$q$ at the next step, or whether all run infixes starting in the current state lead to some state other than~$q$.
If~$c$ is unmatched, however, we treat~$c$ similarly to a local action and additionally denote that no unmatched return may be read anymore, since doing so would contradict~$c$ being unmatched.
Note that, since \oneaja process the matching return of a call after a jump instead of processing the symbol following it, we introduce waiting states that are used to delay execution for a single step.

\newcommand{\wait}{\mathit{wait}}

Formally, we define the set of states $ Q = Q_\varphi \cup \set{\top} \cup Q' \cup \bigcup\nolimits_{{\varphi_i \in \range(t^\aut)}}Q^i$,
where $Q_\varphi = \set{q, q_\wait \mid q \in (Q^\aut \times \{\ini, \fin\}) \cup (Q^\aut \times Q^\aut \times \Gamma)}$ and where the state $\top$ is used as an accepting sink.
The states from $Q^\aut \times \set{\ini, \fin}$ are used to simulate the original automaton before~$(Q^\aut \times \set{\ini})$ and after~$(Q^\aut \times \set{\fin})$ processing at least one unmatched call.

For the sake of readability, we define the transition function for the different components of the automaton separately.
Moreover, we write $(\rightarrow, q)$ and $(\rightarrow_a, q)$ as shorthands for $(\rightarrow, q, \top)$ and $(\rightarrow_a, \top, q)$.
As~$\top$ is used as a sink, we clearly have $\delta_\mathit{sink}(\top, a) = (\rightarrow, \top)$ for all $a \in \Sigma$.
Furthermore, we define~$\delta_\wait(q_\wait, a) = q$ for all~$a \in \Sigma$.

When encountering a final state of $\aut$, we model acceptance of the prefix processed so far by spawning a copy that moves to the initial state of~$\aut'$.
To achieve a uniform presentation, we define the auxiliary formula
	$\chi^f(q, a) = \delta'(q'_I, a)$ if $q \in Q_F^\aut$ and $\chi^f(q, a) = (\rightarrow, \acc)$ otherwise.
	
Moreover, we need notation to denote transitions into the automata $\aut_i$ implementing the negated tests of $\aut$.
Note that, in order to handle the test labeling the initial state of~$\aut$ correctly, we only enter the automaton implementing the negation of a test upon leaving the respective state.
Hence, we move to the successors of the initial state of~$\aut_i$ instead of moving to the initla state itself.
To this end, we define the auxiliary formula $\theta_q^a = \delta^i(q^i_I, a)$, where $t(q) = \varphi_i$.

Upon reading a local action, we have to spawn a copy to continue in the automaton~$\aut'$ if we are currently in a final state, as well as copies to track all possible continuations of the subsequent run.
If the test~$t^\aut(q)$ is violated, however, we verify that this is indeed the case by moving to the automaton implementing~$\neg t^\aut(q)$.
Hence we have
\[
 \delta_\mathit{main}((q, b), l) = \Big[ \chi^f(q, l) \land \bigwedge\nolimits_{{(q, l, q') \in \Delta}} (\rightarrow, (q', b)) \Big] \lor \theta_q^l \text{ for } l \in \locals, b \in \set{\ini, \fin}
\]

Upon reading a call, the constructed \oneaja has to consider both the case that the call is matched as well as that it is unmatched.
In the former case, for all transitions $(q, c, q', A) \in \Delta$ and all states $q'' \in Q$, the automaton either spawns a copy that verifies that it is impossible to go from $q'$ to $q''$ by popping $A$ off the stack in the final transition, or it continues at the matching return in state $q''$.
In the latter case it ignores the effects on the stack and denotes that it may not read any returns from this point onwards by setting the binary flag in its state to $\fin$.
Similarly to the previous case, we can instead verify that~$t(q)$ is violated by moving to the automaton implementing~$\neg t(q)$.
\begin{multline*}
	\delta_\mathit{main}((q, b), c)= \Big[ \chi^f(q) \land
	\bigwedge\nolimits_{{(q, c, q', A) \in \Delta, q'' \in Q}} \Big[ (\rightarrow, (q', q'', A)) \lor (\jump, (q'', b)_\wait) \Big] \land \\
	\bigwedge\nolimits_{{(q, c, q', A) \in \Delta}} (\rightarrow, (q', \fin)) \Big] \lor \theta_q^c \text{\quad  for } c \in \calls, b \in \set{\ini, \fin}
\end{multline*}

The main automaton may only handle returns as long as it has not skipped any calls.
If it encounters a return after having guessed that a call is unmatched, it moves to the accepting sink in order to be able to continue the simulation of all remaining runs.
\begin{align*}
	&\delta_\mathit{main}((q, \ini), r) = \Big[ \chi^f(q, r) \land \bigwedge\nolimits_{{(q, r, \bot, q') \in \Delta}} (\rightarrow, (q', \ini)) \Big] \lor \theta_q^r \text{\quad  for } r \in \returns \\
	&\delta_\mathit{main}((q, \fin), r) = (\rightarrow, \top) \text{\quad  for } r \in \returns
\end{align*}

The transition function $\delta_\mathit{main}$ determines the behavior of the main automaton.
It remains to define the behavior of the copies of the automaton verifying the inability of the automaton to move to some particular state upon reading a matching return.
These behave similarly to the main automaton on reading local actions and calls.
The main difference in handling calls is that these automata do not need to guess whether or not a call is matched:
Since they are only spawned on reading supposedly matched calls and terminate their run upon reading the matching return, all calls they encounter can be assumed to be matched as well.
Additionally, they never transition to the automaton $\aut'$, but merely to the automaton implementing the negation of the test of the current state upon having verified their guess.
If instead they moved, say, to the accepting sink~$\top$, there would indeed be a possibility to move to the chosen state upon popping the given stack symbol, which would contradict the nondeterministic guess made upon reading the matching call.
\begin{align*}
	\delta_\mathit{ver}((q, q', A), l) = & \Big[ \bigwedge\nolimits_{{(q, l, q'') \in \Delta}} (\rightarrow, (q'', q', A)) \Big] \lor \theta_q^l \text{\quad  if } l \in \locals \\
	\delta_\mathit{ver}((q, q', A), c) = & \Big[ \bigwedge\nolimits_{{(q, c, q'', A') \in \Delta, q''' \in Q}} (\rightarrow, (q'', q''', A')) \lor
	(\jump, (q''', q', A)_\wait) \Big] \lor \theta_q^c \text{\quad  if } c \in \calls \\
	\delta_\mathit{ver}((q, q', A), r) = &\, \theta_q^r \text{\quad  if } r \in \returns, (q, r, A, q') \in \Delta \\
	\delta_\mathit{ver}((q, q', A), r) = &\, (\rightarrow, \rej) \text{\quad  if } r \in \returns, (q, r, A, q') \not\in \Delta
\end{align*}
We then define the complete transition function $\delta$ of $\aut_\varphi$ as the union of the previously defined partial transition functions.
Since their domains are pairwise disjoint, this union is well-defined.
\[ \delta = \delta_\mathit{sink} \cup \delta_\wait \cup \delta' \cup \bigcup\nolimits_{{\varphi_i \in \range(t)}} \delta^i \cup \delta_\mathit{main} \cup \delta_\mathit{ver} \]

Finally, we make all states obtained by the translation of~$\aut$ accepting.
Thus, the simulations of all runs of~$\aut$ are accepting, which lets the complete automaton~$\aut_\varphi$ track all prefixes of the processed word that are accepted by~$\aut$.
The \oneaja 
\[ \aut_\varphi = (Q, \vpdalphabet, \delta, (q^\aut_\initmark, 0), Q_F \cup Q'_F \cup \bigcup\nolimits_{\varphi_i \in \range(t^\aut)} Q^i_F ) \]
then recognizes the language of $\varphi = \ddiamond{\aut}\varphi'$, where $Q_F = (Q^\aut \times \{\ini, \fin\}) \cup (Q^\aut \times Q^\aut \times \Gamma) \cup \set{\top}$.
\qed
\end{proof}

Having given a translation of~\vldl formulas into \oneajas, we now show how to transform a given \oneaja~$\aut$ into a tree automaton recognizing the stack trees of words recognized by~$\aut$.
To this end, consider a run~$R$ of a \oneaja~$\aut$ on some word~$\alpha \in \Sigma^\omega$, as illustrated on the left-hand side of Figure~\ref{fig:run-encoding}.
As argued above, the cardinal positions of the processed word serve as synchroniziation points in the run of~$\aut$ on~$\alpha$:
If~$i$ is a cardinal position of~$\alpha$, then there exist no positions~$j,j' \in \nats$ with $j < i < j'$ such that~$R$ contains an edge from level~$j$ to level~$j'$.
In other words, each infinite path starting in the initial vertex~$v_\initmark$ of~$R$ contains a vertex on level~$i$ for each cardinal position~$i$ of~$\alpha$.
Hence, we are able to decide whether or not~$R$ is accepting by considering finite paths of~$R$ starting and ending in levels~$i$ and~$i'$, respectively, where~$i$ and~$i'$ are cardinal positions of~$\alpha$.

More formally, we demonstrate that the breakpoint construction of Miyano and Hayashi~\cite{MiyanoHayashi84} can be adapted to \oneajas.
To this end, let~$R = (V, E)$ and let~$V_i$ be the vertices occurring in~$R$ on level~$i$, i.e.,~$V_i = \set{(i,q) \in V \mid q \in Q}$.
A breakpoint sequence over~$R$ is an infinite sequence of cardinal positions $0 = i_0 < i_1 < i_2 \cdots$ of~$\alpha$ such that all finite paths in~$R$ starting on level $i_j$ and ending on level~$i_{j+1}$ contain at least one accepting vertex.
Each cardinal position~$i_j$ in a breakpoint sequence is called a breakpoint.

\begin{lemma}
\label{lem:breakpoint-correctness}
Let~$R$ be a run of a \oneaja.
The run $R$ is accepting if and only if there exists a breakpoint sequence over~$R$.
\end{lemma}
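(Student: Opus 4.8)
The plan is to prove the two implications separately. The ``if'' direction will be routine, relying only on the synchronization property of cardinal positions recalled above (no edge of $R$ leads from a level strictly below a cardinal position to a level strictly above it, so every infinite path starting in $v_\initmark$ visits a vertex on every cardinal level). The ``only if'' direction is the analogue of the König's lemma argument underlying the Miyano--Hayashi construction, and this is where the work lies.

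For the ``if'' direction, I would take a breakpoint sequence $0 = i_0 < i_1 < i_2 < \cdots$ over $R$ and an arbitrary infinite path $\rho$ of $R$ starting in $v_\initmark$. By the synchronization property, $\rho$ contains a vertex on level $i_j$ for every $j \in \nats$, so for each $j$ the infix of $\rho$ between its vertices on levels $i_j$ and $i_{j+1}$ is a finite path of $R$ starting on level $i_j$ and ending on level $i_{j+1}$; by the defining property of a breakpoint sequence it contains an accepting vertex, say on level $\ell_j \in [i_j, i_{j+1}]$. Since $\ell_j \geq i_j$ and $i_j \to \infty$, the levels $\ell_j$ are unbounded, so $\rho$ contains infinitely many accepting vertices. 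As $\rho$ was arbitrary, $R$ is accepting.

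For the ``only if'' direction I would assume $R$ accepting and build the breakpoints inductively, with $i_0 = 0$. Given $i_j$, the claim is that there exists a cardinal position $i_{j+1} > i_j$ such that every finite path of $R$ from level $i_j$ to level $i_{j+1}$ contains an accepting vertex; iterating this claim yields a strictly increasing, infinite breakpoint sequence. To prove the claim, suppose it fails: then for every cardinal position $i > i_j$ there is a finite path $P_i$ of $R$ from level $i_j$ to level $i$ all of whose vertices are non-accepting (in particular $P_i$ starts at a non-accepting vertex on level $i_j$). The crucial observation is that, since an edge of $R$ never jumps over a cardinal position and always increases the level, any path from level $i_j$ to level $i$ must pass through a vertex on every cardinal level lying in $(i_j, i]$; hence the length of $P_i$ is at least the number of cardinal positions in $(i_j, i]$, which tends to infinity as $i$ ranges over the infinitely many cardinal positions of $\alpha$. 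Therefore the tree $F$ of all non-accepting finite paths of $R$ that start on level $i_j$ is nonempty, finitely branching (each vertex of $R$ has only finitely many successors and level $i_j$ carries finitely many vertices), and infinite. By König's lemma $F$ has an infinite branch, i.e.\ an infinite non-accepting path $\rho$ of $R$ starting at some vertex $v_0$ on level $i_j$. Prefixing $\rho$ with a finite path from $v_\initmark$ to $v_0$ (which exists since all vertices of $R$ are reachable from $v_\initmark$) gives an infinite path from $v_\initmark$ containing only the finitely many accepting vertices on that prefix, contradicting acceptance of $R$.

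The main obstacle I anticipate is the infinitude of $F$ in the forward direction: it hinges entirely on the lower bound on the length of non-accepting paths, which in turn packages together the synchronization property (no edge of $R$ jumps over a cardinal level, so paths crossing many cardinal levels are long) and the fact, obtained from the proof of Theorem~\ref{thm:stack-tree-regularity}, that every word has infinitely many cardinal positions. Once that bound is established, the remaining points---that the constructed sequence is strictly increasing and infinite, and that the prepended path contributes only finitely many accepting vertices---are immediate.
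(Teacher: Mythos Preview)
Your proposal is correct and follows essentially the same approach as the paper: both directions are argued exactly as the paper does, with the inductive construction of breakpoints and a contradiction via an infinite non-accepting path for the forward direction. The only difference is that you spell out the compactness step explicitly via K\"onig's lemma (and justify why the tree of bad paths is infinite using the unboundedly many cardinal positions), whereas the paper simply asserts ``hence, there also exists an infinite path starting on level~$i_j$ that does not contain an accepting vertex''; your version is thus a more careful rendering of the same argument.
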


\begin{proof}
First assume that there exists a breakpoint sequence~$0 = i_0, i_1, i_2,\dots$ over~$R$.
Then~$R$ is clearly accepting, as each infinite path~$\pi$ starting in~$v_\initmark$ is of the form~$\pi = v_0 \pi_0 v_1 \pi_1 v_2 \pi_2 \cdots$, where each $v_j \pi_j v_{j+1}$ is a path from level $i_j$ to level~$i_{j+1}$, hence $v_j \pi_j v_{j+1}$ contains at least one accepting vertex.
Thus,~$\pi$ is accepting.

For the other direction, assume that~$R$ is accepting.
We show the existence of a breakpoint sequence inductively and begin by defining~$i_0 = 0$.
Now let~$i_0, \dots, i_j$ be a finite prefix of a breakpoint sequence and assume towards a contradiction that no cardinal position~$i_{j+1}$ exists such that~$i_0,\dots,i_j,i_{j+1}$ is a prefix of a breakpoint sequence.
Then, for each cardinal position~$k$, there exists a path from some vertex on level~$i_j$ to some vertex on level~$k$ that does not contain an accepting vertex.
Hence, there also exists an infinite path starting on level~$i_j$ that does not contain an accepting vertex, which contradicts~$R$ being accepting.
Thus, there exists a cardinal position~$i_{j+1}$ such that $i_0, \dots, i_j, i_{j+1}$ is a prefix of some breakpoint sequence.
Hence, there exists a breakpoint sequence over~$R$.
\qed
\end{proof}

Given some \oneaja~$\aut$, we now construct a tree automaton that verifies that the input tree is indeed a stack tree and, if this is the case, simulates a run of~$\aut$ on the word represented by the input tree by keeping track of the set of states at each level.
Moreover, it verifies the existence of a breakpoint sequence, visiting an accepting state on the cardinal branch of the processed tree every time the corresponding symbol is at a cardinal position of the input word that can continue the prefix of the breakpoint sequence constructed so far.
In order to do so, we adapt the breakpoint construction by Miyano and Hayashi~\cite{MiyanoHayashi84}.
The key insight of this construction is that, given some breakpoint~$i$, the vertices of any level~$j > i$ can be partitioned into two sets~$A_j$ and~$N_j$.
The set~$A_{j}$ contains those states such that each finite path from some vertex on level~$i$ to some vertex on level~$j$ visits at least one accepting state, while~$N_{j}$ contains the remaining vertices on level~$j$.
We illustrate this partitioning on the left-hand side of Figure~\ref{fig:run-encoding}.
If the set~$N_{j}$ is empty, then~$j$ continues the breakpoint sequence constructed so far.
We adapt this technique in order to translate \oneaja into tree automata by keeping track of the sets~$A_i$ and~$N_i$ along the cardinal branch of the stack tree.
Upon encountering a matched call at position~$i$, the tree automaton guesses the sets~$A_j$ and~$N_j$ reached at the next cardinal position~$j$ and verifies this guess when processing the nested infix of position~$i$.

\begin{figure}
\centering
	\begin{tikzpicture}
		\begin{scope}[xscale=1.6]
			\node (0-0) at (0,0) {$(q_0, 0)$};
			
			\node (1-0) at (1,0) {$(q_1, 1)$};
			\node (1-1) at (1,1) {$(q_2, 1)$};
			
			\node (2-1) at (2,1) {$(q_3, 2)$};
			\node (2-2) at (2,2) {$(q_4, 2)$};
			
			\node (3-0) at (3,0) {$(q_5, 3)$};
			\node (3-1) at (3,1) {$(q_6, 3)$};
			\node (3-2) at (3,2) {$(q_7, 3)$};
			
			\node (4-0) at (4.5,0) {};
			\node (4-1) at (4.5,1) {};
			
			\path[draw,thick,-stealth]
				(0-0) edge (1-0) edge (1-1)
				(1-0) edge (3-0) edge (3-1)
				(1-1) edge (2-1) edge (2-2)
				(2-1) edge (3-1)
				(2-2) edge (3-2)
				(3-0) edge[dashed,shorten >= .25cm] (4-0)
				(3-1) edge[dashed,shorten >= .25cm] (4-0) edge[dashed,shorten >= .25cm] (4-1)
				(3-2) edge[dashed,shorten >= .25cm] (4-1);
				
			\begin{pgfonlayer}{background}
				\draw[thick,red!80!black,fill=red,rounded corners]
					($(3-0.south west) - (.1,.1)$) -- ($(3-2.north west) + (-.1,.1)$) --
					($(3-2.north east) + (.1,.1)$) -- ($(3-2.south east) + (.1,-.1)$) --
					($(3-2.south west) - (.05,.1)$) -- ($(3-0.north west) + (-.05,.1)$) --
					($(3-0.north east) + (.1,.1)$) -- ($(3-0.south east) + (.1,-.1)$) -- cycle;
				\node[anchor=south east,red!80!black] at ($(3-1.west) + (-.1,.2)$) {$A$};
					
				\node[draw,blue,fill=blue!20,rounded corners] at (3-1) {$(q_6,3)$};
				\node[anchor=west,blue] at (3-1.north east) {$N$};
				
				\draw[thick,red,fill=red!20,rounded corners]
					($(3-0.south west) - (.05,.05)$) -- ($(3-0.north west) + (-.05,.05)$) --
					($(3-0.north east) + (.05,.05)$) -- ($(3-0.south east) + (.05,-.05)$) -- cycle;
				\node[anchor=south,red] at ($(3-2.north) + (0,.1)$) {$A_\direct$};
					
				\draw[thick,red,fill=red!20,rounded corners]
					($(3-2.south west) - (.05,.05)$) -- ($(3-2.north west) + (-.05,.05)$) --
					($(3-2.north east) + (.05,.05)$) -- ($(3-2.south east) + (.05,-.05)$) -- cycle;
				\node[anchor=north,red] at ($(3-0.south) - (0,.1)$) {$A_\jump$};
				
				\node[draw,green,fill=green!20,rounded corners] at (3-2) {$(q_7, 3)$};
				\node[draw,green,fill=green!20,rounded corners] at (1-0) {$(q_1, 1)$};
			\end{pgfonlayer}
			
			\begin{scope}[shift={(0,-.75)}]
				\node (alpha-label) at (-.5,0) {$\alpha =$};
				\node (alpha-0) at (.5,0) {$l$};
				\node (alpha-1) at (1.5,0) {$c$};
				\node (alpha-2) at (2.5,0) {$l$};
				\node (alpha-3) at (3.5,0) {$r$};
			\end{scope}

		\end{scope}

		\begin{scope}[xshift=11.5cm,yshift=2.5cm,xscale=2]
			\node (root) at (0,0) {$l$};
			\node[anchor=west,inner sep=0] at (root.north east)
				{\scriptsize $( \set{q_0}, \emptyset)$};
			
			\node (0) at (-.5,-1) {$c$};
			\node[anchor=east,inner sep=0,align=center] at (0.north west)
				{\scriptsize $\begin{aligned}( &\set{q_2},\\ &\set{q_1}) \end{aligned}$};
			\node (1) at (.5,-1) {$\bot$};
			\node[anchor=west,inner sep=0,align=center] at (1.north east) {\scriptsize $q_\bot$};
			
			\node (00) at (-.75,-2) {$r$};
			\node[anchor=east,inner sep=0,align=center] at (00.north west)
				{\scriptsize $\begin{aligned}( &\set{q_6},\\ &\set{q_5,q_7}) \end{aligned}$};
			\node (01) at (-.25,-2) {$l$};
			\node[anchor=north west,inner sep=0,align=center] at (01.south east)
				{\scriptsize $\begin{aligned}( &\set{q_4,q_3}, \emptyset, \\ &\set{q_7}, \set{q_6}) \end{aligned}$};
				
			\node (10) at (.35,-2) {};
			\node (11) at (.65,-2) {};
			
			\node (000) at (-.875,-3) {};
			\node (001) at (-.625,-3) {};
			\node (010) at (-.375,-3) {};
			\node (011) at (-.125,-3) {};
			
			\path[draw,thick,]
				(root) edge (0) edge (1)
				(0) edge (00) edge (01)
				(1) edge[dashed] (10) edge[dashed] (11)
				(00) edge[dashed] (000) edge[dashed] (001)
				(01) edge[dashed] (010) edge[dashed] (011);
		\end{scope}
	\end{tikzpicture}	
	\caption{Encoding of a run of a \oneaja (left) into a run of a tree automaton (right). The states~$q_1$ and~$q_7$ are accepting. The positions~$0$,~$1$, and~$3$ are cardinal positions of~$\alpha$. Note $N = N_\direct = N_\jump$.}
	\label{fig:run-encoding}
\end{figure}
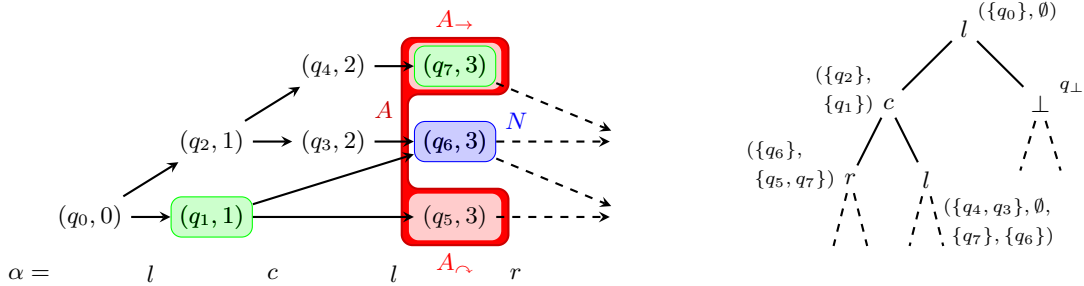

\begin{lemma}
\label{lem:oneaja-to-tree-automata}
For every \oneaja~$\aut$ there exists an effectively constructible tree automaton~$\treeaut$ with~$\lang(\treeaut) = \StackTree(\lang(\aut))$ and~$\card{\treeaut} \in \bigo(2^\card{\varphi})$.
\end{lemma}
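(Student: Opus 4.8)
The plan is to build~$\treeaut$ as the intersection of the tree automaton~$\treeaut_{\StackTrees}$ from Theorem~\ref{thm:stack-tree-regularity}, which enforces that the input is a stack tree~$\StackTree(\alpha)$---so that the cardinal branch and the cardinal positions of~$\alpha$ are well defined---with a \emph{simulation automaton}~$\treeaut_{\aut}$ that simulates a run~$R$ of the \oneaja~$\aut$ on~$\alpha$ and certifies its acceptance by means of Lemma~\ref{lem:breakpoint-correctness}. Since tree automata are closed under intersection with at most a quadratic blow-up and~$\card{\treeaut_{\StackTrees}}$ does not depend on~$\aut$, it suffices to construct~$\treeaut_{\aut}$ with~$2^{\bigo(\card{\aut})}$ states. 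I would obtain~$\treeaut_{\aut}$ by adapting the breakpoint construction of Miyano and Hayashi~\cite{MiyanoHayashi84} to the jumping automaton~$\aut$.

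As it descends the cardinal branch of~$\StackTree(\alpha)$, the states of~$\treeaut_{\aut}$ are pairs~$(S,O)$ with~$O \subseteq S \subseteq Q$: here~$S$ is the set of states of~$\aut$ occurring on the current level of~$R$, and~$O$ collects those reachable from the last breakpoint by a path not yet containing an accepting vertex. The initial state is~$(\set{q_\initmark}, \set{q_\initmark}\setminus Q_F)$, and the accepting states of~$\treeaut_{\aut}$ are those with~$O = \emptyset$ together with a distinguished accepting~$\bot$-sink. On a node~$b$ labelled by a local action or a return,~$\treeaut_{\aut}$ picks for each~$q \in S$ a set of commands satisfying~$\delta(q, t(b))$---all of them direct in this case---updates~$S$ to the set of targets and~$O$ accordingly, resetting~$O$ to the non-accepting states of the new~$S$ (and thus recording a breakpoint) whenever~$O$ becomes empty, and sends the~$\bot$-sink to the right child~$b1$, which is labelled~$\bot$ in every stack tree. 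On an unmatched call the behaviour is identical, except that the roles of~$b0$ and~$b1$ are exchanged, as a stack tree stores the remaining word in the right child of an unmatched call. Because the cardinal branch is the only infinite branch of a stack tree, every other branch eventually enters the accepting~$\bot$-sink, so a run of~$\treeaut_{\aut}$ is accepting exactly when infinitely many breakpoints are recorded along the cardinal branch.

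The essential case is that of a matched call at a node~$b$, where~$t(b0)$ is the matching return and~$\restr{\StackTree(\alpha)}{b1}$ encodes the nested infix. Here~$\treeaut_{\aut}$ splits the commands chosen for the states in~$S$ into direct and jumping ones; the targets of the jumping commands are added to the level of the matching return, while the direct ones lead into the nested infix. The automaton then \emph{guesses} a pair~$(G,H)$, where~$G$ is the set of states that~$R$ reaches on the level of the matching return \emph{through} the nested infix and~$H \subseteq G$ collects those still carrying an obligation, sends the union of~$G$ with the targets of the jumping commands---with~$O$ recomputed---to~$b0$, and sends a \emph{verification} state~$(S', O', G, H)$ to~$b1$, whose first two components track the run inside the nested infix exactly as on the cardinal branch (with~$S'$ initialised to the direct targets and~$O'$ to those of them that inherit an obligation) and whose last two components carry the guess to be discharged. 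Since the nested infix is finite and well-matched,~$\restr{\StackTree(\alpha)}{b1}$ is a finite stack tree, along whose own cardinal branch the verification continues, handling matched calls inside it recursively in the same fashion; by the shape of the encoding of a finite well-matched word the simulated run leaves the nested infix at a single node whose left child is~$\bot$, and there~$\treeaut_{\aut}$ checks that the states it has produced coincide with the guessed~$(G,H)$ before handing the~$\bot$-sink to both children.

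Correctness then has two directions. If~$\alpha \in \lang(\aut)$, I fix an accepting run~$R$ of~$\aut$ on~$\alpha$, which admits a breakpoint sequence by Lemma~\ref{lem:breakpoint-correctness}; reading off the level sets of~$R$ at the cardinal positions of~$\alpha$ and using the states by which~$R$ leaves each nested infix as the matching guesses yields an accepting run of~$\treeaut_{\aut}$ on~$\StackTree(\alpha)$, whence~$\StackTree(\alpha) \in \lang(\treeaut)$. Conversely, an accepting run of~$\treeaut$ on a tree~$t$ first certifies via~$\treeaut_{\StackTrees}$ that~$t = \StackTree(\alpha)$ for some~$\alpha \in \Sigma^\omega$, and then the tracked state sets together with the discharged guesses assemble an accepting run~$R$ of~$\aut$ on~$\alpha$: the consistency of every verification guarantees that each vertex of~$R$ has a successor, and the infinitely many accepting states met along the cardinal branch witness a breakpoint sequence over~$R$, so~$R$ is accepting by Lemma~\ref{lem:breakpoint-correctness} and~$\alpha \in \lang(\aut)$. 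Finally, each state of~$\treeaut_{\aut}$ consists of a bounded number of subsets of~$Q$, so~$\treeaut_{\aut}$---and hence~$\treeaut$---has~$2^{\bigo(\card{\aut})}$ states, which is exponential in~$\card{\varphi}$ by Lemma~\ref{lem:vldl-to-oneaja}. I expect the main obstacle to lie in the matched-call case: one must show that the guesses made at matched calls can be completed consistently exactly when a genuine run of~$\aut$ exists, which rests on the fact that the encoding folds every nested infix into a single subtree through which each copy of~$\aut$ that enters it must travel and which it must eventually leave at the level of the matching return.
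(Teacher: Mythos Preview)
Your proposal is correct and takes essentially the same approach as the paper: both intersect the stack-tree recognizer from Theorem~\ref{thm:stack-tree-regularity} with a simulation automaton obtained by lifting the Miyano--Hayashi breakpoint construction to the stack-tree encoding, using guess-and-verify states of the form~$((S,O),(G,H))$ at matched calls to summarize the effect of nested infixes and appealing to Lemma~\ref{lem:breakpoint-correctness} for correctness. The paper parameterizes the breakpoint pair as a partition~$(A,N)$ of the current level rather than your~$(S,O)$ with~$O\subseteq S$, but this is the same data via~$A = S\setminus O$, $N = O$, and the remaining details---the sink for~$\bot$-subtrees, the treatment of unmatched calls, and the discharge of the guess when the nested-infix subtree terminates---coincide.
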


\begin{proof}
We construct a tree automaton~$\treeaut'$ such that $\lang(\treeaut') \cap \StackTree(\Sigma^\omega) = \StackTree(\lang(\aut))$.
Recall that, due to Theorem~\ref{thm:stack-tree-regularity}, we obtain a tree automaton~$\treeaut_\Sigma$ with~$\lang(\treeaut_\Sigma) = \StackTree(\Sigma^\omega)$.
By intersecting~$\treeaut'$ with~$\treeaut_\Sigma$ we subsequently obtain~$\treeaut$ with the properties stated above.

We have explained the behavior of the automaton~$\treeaut'$ along the cardinal branch above.
It remains to take into account the effect of nested infixes on the states reached by~$\aut$ at cardinal positions.
To this end, we note that each state reached at a cardinal position is either reached by taking a jumping transition from the previous cardinal position, or by taking a direct transition from the directly preceding position, i.e., from the last position of the nested infix.
Thus, when reading a matched call at position~$i$, the automaton~$\treeaut$ guesses sets~$A_\direct, N_\direct \subseteq Q$ that are reached eventually by copies of the automaton that process the nested infix~$w$ of position~$i$.
It then assumes that these states are indeed reached by processing~$w$ and verifies that guess while processing~$\StackTree(w)$, i.e., the right-hand subtree of the matched call.
We show an example of this encoding of a run of~$\aut$ as a run of~$\treeaut'$ on the right-hand side of Figure~\ref{fig:run-encoding}.

We use two kinds of states in order to implement this idea.
States of the form~$(A, N)$, where~$A$ and~$N$ partition a nonempty subset of~$Q$ are used along the cardinal branch of the processed stack tree, implementing the breakpoint construction.
Furthermore, we use states of the form~$((A, N), (A_G, N_G))$, where~$(A,N)$ as well as~$(A_G,N_G)$ are partitions of nonempty subsets of~$Q$, to verify guesses about the effects of processing nested infixes.
Moreover, we use a sink-state~$q_\bot$ in order to process subtrees labelled exclusively with~$\bot$.

Let \oneaja~$\aut = (Q, \vpdalphabet, \delta, q_\initmark, Q_F)$.
In order to define~$\treeaut'$ concisely, we introduce some notation.
First, let~$\compl{Q_F} = Q \setminus Q_F$.
Moreover, given some set~$C \subseteq \comms(Q)$, we extract the direct- and jump-target-states $Q^C_\direction = \set{q_\direction \mid (\direction', q_\direct, q_\jump) \in C}$ for~$\direction \in \dirs$.
Furthermore, given some nonempty set~$B = \set{\varphi_0, \dots, \varphi_n}$ of Boolean formulas over~$\comms(Q)$, we write~$C \models B$ if~$C$ is a minimal model of~$B$, i.e., if ~$C = \cup_{i=0}^n C_i$ with~$C_i \models \varphi_i$.

Formally, we define the tree automaton~$\treeaut' = (Q', \Sigma, \Delta, q'_\initmark, Q'_F)$, where
$Q' = (2^Q)^2 \cup ((2^Q)^2 \times (2^Q)^2) \cup \set{q_\bot}$,
$q'_\initmark = (\set{q_\initmark}, \emptyset)$, if~$q_\initmark \in F$ and $q'_\initmark = (\emptyset, \set{q_\initmark})$ otherwise,
$Q'_F = (2^Q \times \set{\emptyset}) \cup ((2^Q)^2 \times (2^Q)^2) \cup \set{q_\bot}$,
and the transition relation~$\Delta$ defined as the smallest relation that satisfies all of the following conditions:
\begin{description}
	\item[Local Actions and Returns]
		\label{transition-function:local-and-return}
		Let~$x \in \locals \cup \returns$, let~$S$ be a nonempty subset of~$Q$, and let~$(A,N)$ be a partition of~$S$.
		Moreover, let~$C_A,C_N \subseteq \comms(Q)$ such that $C_A \models \set{\delta(q, x) \mid q \in A}$ and such that $C_N \models \set{\delta(q, x) \mid q \in N}$.
		Since~$\aut$ cannot take jumping transitions upon processing local actions or returns, we define $N' = Q^{C_N}_\direct \setminus F $ and $A' = (Q^{C_N}_\direct \cap F) \cup (Q^{C_A}_\direct \setminus N')$, thus updating the partitions~$A$ and~$N$ as described above.
		Note that the successor of a state in~$A$ may be in the subset~$N'$ if the same state is a successor of a state in~$N$.
		Moreover, it is easy to verify that~$(A',N')$ indeed are a partition of some set~$S' \subseteq Q$.
		Let~$(A_G,N_G)$ be a partition of some~$S_G \subseteq Q$.
		We require $(((A, N), (A_G, N_G)), x, ((A', N'), (A_G, N_G)), q_\bot) \in \Delta$.
		Furthermore, if~$N \neq \emptyset$, we require $((A, N), x, (A', N'), q_\bot) \in \Delta$.
	\item[Unmatched Calls]
		\label{transition-function:unmatched-call}
		Let~$c \in \calls$, let~$S$ be a nonempty subset of~$Q$, and let~$(A,N)$ be a partition of~$S$.
		Moreover, let~$C_A,C_N \subseteq \comms(Q)$ such that $C_A \models \set{\delta(q, x) \mid q \in A}$ and such that $C_N \models \set{\delta(q, x) \mid q \in N}$.
		In this case, we guess that the currently processed call is unmatched.
		Thus, similarly to the previous case,~$\aut$ can only take direct transitions.
		Define $N' = Q^{C_N}_\direct \setminus F $ and $A' = (Q^{C_N}_\direct \cap F) \cup (Q^{C_A}_\direct \setminus N')$.
		We require $((A, N), c, q_\bot, (A', N')) \in \Delta$.
		We do, however, not require a transition processing unmatched calls when verifying some guess along a nested infix, as unmatched calls cannot occur in nested infixes.
	\item[Matched Calls]
		\label{transition-function:matched-call}
		Let~$c \in \calls$ let~$S \subseteq Q$ be nonempty, and let~$(A,N)$ be a partition of~$S$.
		Moreover, let~$C^\direction_A,C^\direction_N \subseteq \comms_\direction(Q)$ for~$\direction \in \dirs$ such that $(C^\direct_A \cup C^\jump_A) \models \set{\delta(q, x) \mid q \in A}$ and such that $(C^\direct_N \cup C^\jump_N) \models \set{\delta(q, x) \mid q \in N}$.
		We follow the same idea as in the previous two cases and first define the sets of states reached by~$\aut$ at the first position of the nested infix.
		To this end, let~$N'_\direct = Q^{C^\direct_N}_\direct \setminus F$ and let $A'_\direct = (Q^{C^\direct_A}_\direct \setminus N'_\direct) \cup (Q^{C^\direct_N}_\direct \cap F)$.
		Moreover, we define the partition of states reached by~$\aut$ at the matching return of the current letter by taking a jumping transition as $N'_\jump = (Q^{C^\jump_N}_\jump \setminus F)$ and $A'_\jump = (Q^{C^\jump_A}_\jump \setminus N'_\jump) \cup (Q^{C^\jump_N}_\jump \cap F)$.
		Finally, we guess that~$\aut$ finishes processing the nested infix with the partition~$(A_G,N_G)$ of some nonempty~$S_G \subseteq Q$.
		Thus, we require $((A, N, A'_G, N'_G), c, (A'_\jump \cup A_G, N'_\jump \cup N_G, A'_G, N'_G), (A'_\direct, N'_\direct, A_G, N_G)) \in \Delta$ for arbitrary nonempty~$S'_G \subseteq Q$ where $(A'_G,N'_G)$ is a partition of~$S'_G$.
		Moreover, if~$N \neq \emptyset$, we require $((A, N), c, (A'_\jump, N'_\jump), (A'_\direct, N'_\direct, A_G, N_G)) \in \Delta$. 
	\item[Breakpoint]
		\label{transition-function:breakpoint}
		Let~$x \in \Sigma$, let~$S \subseteq Q$ be nonempty, and let~$A = S \cap F$ and~$N = S \setminus F$.
		If $((A, N), x, q'_0, q'_1) \in \Delta$, we require~$((S, \emptyset), x, q'_0, q'_1) \in \Delta$.
	\item[Verified Guess]
		\label{transition-function:nested-infix:epsilon}
		Let~$S \subseteq Q$ be nonempty and let~$(A, N)$ be a partition of~$S$.
		We require $((A, N, A, N), \bot, q_\bot, q_\bot) \in \Delta$.
	\item[Sink]
		\label{transition-function:sink}
		We require $(q_\bot, \bot, q_\bot, q_\bot) \in \Delta$.	
\end{description}

Let~$\alpha$ be some word accepted by~$\aut$ and let~$t = \StackTree(\alpha)$.
Then one can easily construct a run of~$\treeaut'$ on~$t$ from a run of~$\aut$ on~$\alpha$ as indicated in Figure~\ref{fig:run-encoding}.
In fact, if there exists an accepting run of~$\aut$ on~$\alpha$, then there also exists an accepting run of~$\treeaut'$ on~$t$, due to the implementation of the breakpoint construction and due to Lemma~\ref{lem:breakpoint-correctness}.
Conversely, if~$\treeaut'$ accepts a stack tree~$t = \StackTree(\alpha)$ for some word~$\alpha \in \Sigma^\omega$, say with the accepting run~$R'$, then it is possible to reconstruct an accepting run~$R$ of~$\aut$ on~$\alpha$ from~$R'$ via a preorder-traversal of~$R'$ that traverses the right-hand children of vertices first.
Again, due to Lemma~\ref{lem:breakpoint-correctness}, the run~$R$ is accepting if and only if~$R'$ is accepting.
Hence, by intersecting~$\treeaut'$ with the automaton~$\treeaut_\Sigma$ recognizing~$\StackTree(\Sigma^\omega)$ we obtain the automaton~$\treeaut$ recognizing~$\StackTree(\lang(\aut))$.
As~$\card{\treeaut'} \in \bigo(2^\card{\aut})$, and since~$\treeaut_\Sigma$ is of fixed size, we obtain~$\card{\treeaut} \in \bigo(2^\card{\aut})$.
\qed
\end{proof}

The proof of Theorem~\ref{thm:vldl-to-tree-automata} follows from Lemma~\ref{lem:vldl-to-oneaja} and Lemma~\ref{lem:oneaja-to-tree-automata}:
Given a \vldl formula~$\varphi$, we first construct the \oneaja~$\aut$ with~$\lang(\aut) = \lang(\varphi)$ as demonstrated in the proof of Lemma~\ref{lem:vldl-to-oneaja}.
The automaton~$\aut$ is of size polynomial in~$\card{\varphi}$.
We then construct the tree automaton~$\treeaut$ with~$\lang(\treeaut) = \StackTree(\lang(\aut))$ as shown in the proof of Lemma~\ref{lem:oneaja-to-tree-automata}.
The automaton~$\treeaut$ recognizes~$\StackTree(\lang(\aut)) = \StackTree(\lang(\varphi))$ and is of size exponential in~$\card{\aut}$, i.e., of size exponential in~$\card{\varphi}$.

\section{Reducing \vldl Model Checking to Tree Automata Emptiness}
\label{sec:vldl-model-checking}

In the previous section we have reduced the problem of \vldl satisfiability checking to the emptiness problem for tree automata.
We now consider the problem of \vldl model checking, which is formulated as follows:
``Given a \vps~$\vpsys$ and a \vldl formula~$\varphi$, does~$\traces(\vpsys) \subseteq \lang(\varphi)$ hold true?''
We now show that this problem can be reduced to the emptiness problem for tree automata similarly to the reduction of the satisfiability problem for \vldl to the same problem.

\begin{theorem}
\label{thm:vldl-model-checking}
Let~$\vpsys$ be a \vps and let~$\varphi$ be a \vldl formula.
There exists an effectively constructible tree automaton~$\treeaut$ such that~$\lang(\treeaut) = \emptyset$ if and only if~$\traces(\vpsys) \subseteq \lang(\varphi)$ with~$\card{\treeaut} \in \bigo(2^\card{\varphi} p(\card{\vpsys}))$ for some polynomial~$p$.	
\end{theorem}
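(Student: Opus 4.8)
The plan is to mirror the structure of the satisfiability reduction, using the fact that $\traces(\vpsys) \subseteq \lang(\varphi)$ fails exactly when there is a trace $\alpha$ of~$\vpsys$ with $\alpha \models \neg\varphi$, i.e.\ when $\traces(\vpsys) \cap \lang(\neg\varphi) \neq \emptyset$. Since the tree-encoding $\StackTree$ is injective on $\Sigma^\omega$, this is equivalent to $\StackTree(\traces(\vpsys)) \cap \StackTree(\lang(\neg\varphi)) \neq \emptyset$. I will therefore build one tree automaton recognizing each of these two sets and intersect them.

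First I would apply Theorem~\ref{thm:vldl-to-tree-automata} to $\neg\varphi$ to obtain an effectively constructible tree automaton~$\treeaut_\varphi$ with $\lang(\treeaut_\varphi) = \StackTree(\lang(\neg\varphi))$ and $\card{\treeaut_\varphi} \in \bigo(2^{\card{\neg\varphi}}) = \bigo(2^{\card{\varphi}})$; note that pushing the negation to NNF (as in the proof of Lemma~\ref{lem:vldl-to-oneaja}) changes the size of the formula only by a constant factor, so the exponential bound is preserved. Second, and this is the new component, I would construct a tree automaton~$\treeaut_\vpsys$ with $\lang(\treeaut_\vpsys) = \StackTree(\traces(\vpsys))$ and $\card{\treeaut_\vpsys} \in \bigo(p(\card{\vpsys}))$ for a polynomial~$p$. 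The idea is that a run of~$\treeaut_\vpsys$ on a stack tree $t = \StackTree(\alpha)$ simulates a run of~$\vpsys$ on~$\alpha$ by walking along the cardinal branch of~$t$, carrying the current control state of~$\vpsys$ in its own state. At a node labelled with a local action or a return read at stack-bottom, it simply updates the state via~$\Delta$ and sends~$q_\bot$ into the (all-$\bot$) right subtree. At a matched call~$c$ — whose matching return sits at the left child, with the nested infix folded into the right subtree — the automaton must remember the pushed stack symbol~$A$: it guesses $(q', A)$ with $(q,c,q',A)\in\Delta$, sends into the right subtree a "well-matched simulation" state that begins in~$q'$, carries the obligation to pop exactly~$A$ at the end of that subtree, and sends into the left subtree the state reached by~$\vpsys$ after the matching return pops~$A$. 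The well-matched-simulation states form a separate copy of~$\vpsys$'s state space (augmented with the pending top-of-stack symbol, hence the $\card{Q}\cdot\card{\Gamma}$ factor) that never itself encounters an unmatched return and terminates at the $\bot$-leaf of its subtree when the pop obligation is discharged. An unmatched call is treated like a local action, additionally recording that no further return may be read, exactly as in the $\oneaja$ construction. All states are accepting (a run of~$\vpsys$ need only exist, not satisfy an acceptance condition), except that branches must still be "finite" in the stack-tree sense, which the intersection with~$\treeaut_\Sigma$ from Theorem~\ref{thm:stack-tree-regularity} already enforces; in fact $\treeaut_\vpsys$ can directly incorporate the stack-tree well-formedness checks so that $\lang(\treeaut_\vpsys) = \StackTree(\traces(\vpsys))$ on the nose. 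Since the stack alphabet~$\Gamma$ of a $\vps$ is part of its description, $\card{\Gamma}$ is polynomial in $\card{\vpsys}$ as used here, giving $\card{\treeaut_\vpsys}\in\bigo(p(\card{\vpsys}))$.

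Finally I would take $\treeaut$ to be the product automaton $\treeaut_\varphi \cap \treeaut_\vpsys$, which exists with $\card{\treeaut} \in \bigo(\card{\treeaut_\varphi}\cdot\card{\treeaut_\vpsys}) = \bigo(2^{\card{\varphi}} p(\card{\vpsys}))$ by the closure of tree automata under intersection stated in Section~\ref{sec:preliminaries}. Then $\lang(\treeaut) = \StackTree(\lang(\neg\varphi)) \cap \StackTree(\traces(\vpsys)) = \StackTree(\lang(\neg\varphi) \cap \traces(\vpsys))$, using injectivity of $\StackTree$, so $\lang(\treeaut) = \emptyset$ iff $\traces(\vpsys) \cap \lang(\neg\varphi) = \emptyset$ iff $\traces(\vpsys) \subseteq \lang(\varphi)$, as required.

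\textbf{Main obstacle.} The delicate part is getting $\treeaut_\vpsys$ exactly right at matched calls: one must correctly match the guessed stack symbol~$A$ pushed when reading~$c$ against the symbol popped when the automaton walking the right subtree reaches its terminating $\bot$-leaf, while simultaneously threading the correct post-return control state into the left subtree — all without any stack of its own, relying solely on the fact that the stack-tree encoding has already "folded" matched call/return pairs into sibling positions. Verifying that this local, stackless bookkeeping faithfully captures exactly the configuration-graph edges of~$\vpsys$ (including the stack-bottom return case and unmatched calls) is where the real proof effort lies; the size bound and the intersection step are then routine.
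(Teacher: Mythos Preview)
Your proposal is correct and matches the paper's proof: both reduce to emptiness of the intersection of a tree automaton for $\StackTree(\lang(\neg\varphi))$ (obtained via Theorem~\ref{thm:vldl-to-tree-automata}) with a polynomial-size $\treeaut_\vpsys$ that simulates $\vpsys$ along the cardinal branch, guessing at each matched call the state reached at the return and verifying that guess on the nested-infix subtree, with all states accepting and well-formedness handled by intersecting with $\treeaut_\Sigma$. One small bookkeeping point worth adjusting: the return label~$r$ sits at the \emph{left} child of a matched call, so the pop transition $(q,r,A,q')$ must be checked there---the paper sends the pair $(q_G,A)$ to the left child for exactly this reason---rather than at the $\bot$-leaf of the right subtree where~$r$ is not visible; this is a one-line fix to your sketch and does not affect the argument.
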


\begin{proof}
Recall that~$\traces(\vpsys) \subseteq \lang(\varphi)$ if and only if~$\traces(\vpsys) \cap \lang(\neg\varphi) = \emptyset$.
Moreover, recall that we can effectively construct a tree automaton~$\treeaut_{\neg\varphi}$ such that~$\lang(\treeaut_{\neg\varphi}) = \StackTree(\lang(\neg\varphi))$ due to Theorem~\ref{thm:vldl-to-tree-automata}.
We now construct a tree automaton~$\treeaut_{\vpsys}$ recognizing~$\StackTree(\traces(\vpsys))$.
By intersecting~$\treeaut_{\vpsys}$ and~$\treeaut_{\neg\varphi}$ we subsequently obtain the tree automaton~$\treeaut$ recognizing $\traces(\vpsys) \cap \lang(\neg\varphi)$.
Hence,~$\lang(\treeaut) = \emptyset$ if and only if $\traces(\vpsys) \subseteq \lang(\varphi)$.

It remains to construct~$\treeaut_\vpsys$.
Similarly to the proof of Lemma~\ref{lem:oneaja-to-tree-automata}, we first construct~$\treeaut'_\vpsys$ such that~$\lang(\treeaut'_\vpsys) \cap \StackTree(\Sigma^\omega) = \StackTree(\traces(\vpsys))$.
By intersecting~$\treeaut'_\vpsys$ with~$\treeaut_\Sigma$ we then obtain the required~$\treeaut_\vpsys$.
The idea behind the construction of~$\treeaut'_\vpsys$ is to simulate a run of~$\vpsys$ along the cardinal branch of the tree.
This is straightforward in the case of local actions and unmatched calls or returns.
Upon encountering a matched call,~$\treeaut'_\vpsys$ guesses the state reached by~$\vpsys$ upon encountering the matched return and verifies that guess on the stack tree of the nested infix.

Let~$\vpsys = (Q, \vpdalphabet, \Gamma, \Delta, q_\initmark)$.
We define~$\treeaut'_\vpsys = (Q', q_\initmark, \Delta', Q'_F)$ with~$Q' = Q \cup (Q\times Q) \cup (Q \times \Gamma) \cup (Q \times \Gamma \times Q) \cup \set{q_\bot}$,~$Q'_F = Q'$, and
\[ \Delta' = \Delta_l \cup \Delta_{\mathit{uc}} \cup \Delta_{\mathit{ur}} \cup \Delta_{\mathit{mc}} \cup \Delta_{\mathit{mr}} \cup \Delta_{s} \enspace . \]
The individual components of~$\Delta'$ are defined as follows:
We process local actions using transitions of the form
\[ \Delta_l = \set{ (q, l, q', q_\bot), ((q, q_G), l, (q', q_G), q_\bot) \mid (q, l, q') \in \Delta, q_G \in Q} \enspace . \]
Similarly, upon encountering unmatched calls or returns, we use transitions of the form 
\[ \Delta_{\mathit{uc}} = \set{(q, c, q_\bot, q') \mid (q, c, q', A) \in \Delta} \]
and
\[ \Delta_{\mathit{ur}} = \set{(q, r, q', q_\bot) \mid (q, r, \bot, q') \in \Delta} \enspace , \]
respectively.
When encountering a matched call, we guess a state~$q_G$ reached by the automaton upon processing the matching return and verify that guess using transitions from 
\begin{multline*}
	\Delta_{\mathit{mc}} = \set{(q, c, (q_G, A), (q', q_G)) \mid (q, c, q', A) \in \Delta, q_G \in Q} \cup \\
		\set{((q, q_G), c, (q'_G, A, q_G), (q', q'_G)) \mid (q, c, q', A) \in \Delta, q_G, q'_G \in Q} \enspace .
\end{multline*}
Upon encountering a matched return, we are in some state from $(Q \times \Gamma) \cup (Q \times \Gamma \times Q)$, since a matched return only occurs directly following a matched call.
Hence, we use a transition from 
\[
\Delta_\mathit{mr} = \set{ ((q, A), r, q', q_\bot) \mid (q, r, A, q') \in \Delta } \cup
	\set{ ((q, A, q_G), r, (q', q_G), q_\bot) \mid  (q, r, A, q') \in \Delta}	
\]
in order to process that matched return.
Finally, we define $\Delta_s = \set{(q_\bot, \bot, q_\bot, q_\bot)}$ to continue the run of~$\treeaut'_\vpsys$ upon encountering the sink state~$q_\bot$.

Using the intuition given above, it can easily be verified that~$\treeaut'_\vpsys \cap \StackTree(\Sigma^\omega) = \StackTree(\traces(\vpsys))$ indeed holds true.
Thus, as previously argued, we obtain the automaton~$\treeaut_{\vpsys,\neg\varphi}$ with the properties given in the statement of this lemma.
\qed
\end{proof}

Due to Theorem~\ref{thm:vldl-model-checking}, we obtain a novel asymptotically optimal algorithm for \vldl model checking:
Given a \vps~$\vpsys$ and a \vldl formula~$\varphi$, we construct~$\treeaut$ such that~$\lang(\treeaut) = \emptyset$ if and only if $\traces(\vpsys) \subseteq \lang(\varphi)$.
The automaton~$\treeaut$ can be constructed in exponential time and is of exponential size in~$\card{\varphi}$ and of polynomial size in~$\card{\vpsys}$.
Hence, we can check~$\treeaut$ for emptiness in exponential time in~$\card{\varphi}$ and in polynomial time in~$\card{\vpsys}$.
Since the problem of \vldl model checking is \exptime-complete~\cite{WeinertZimmermann16a}, this algorithm is asymptotically optimal.

\section{Conclusion}
\label{sec:conclusion}

In this work we have presented a correspondence between infinite words over a pushdown alphabet and infinite binary trees.
Moreover, we demonstrated a construction translating \vldl formulas into tree automata that are language-equivalent with respect to the above correspondence.
This construction yields novel algorithms for satisfiability- and model checking of \vldl formulas that reduce the problem to the emptiness problem for tree automata.
Thus, this construction leverages the strong connection between visibly pushdown languages and regular tree languages that was already exhibited by Alur and Madhusudan in their seminal work on the former family of languages~\cite{AlurMadhusudan04}.
Moreover, the construction demonstrates that the well-known breakpoint construction by Miyano and Hayashi~\cite{MiyanoHayashi84}, which is routinely used to remove alternation from stack-free automata, can easily be adapted to transform alternating automata over visibly pushdown words into corresponding alternation-free automata over trees representing such words.

In future work, we plan to empirically evaluate both the algorithms presented in this work as well as those presented in earlier work~\cite{WeinertZimmermann16a}, which reduce the satisfiability- and model checking problems for \vldl to the emptiness problem for visibly pushdown automata.
Recall that our novel algorithm reduces both problems to the emptiness problem for tree automata, which in turn reduces to the well-studied problem of solving a two-player Büchi game.
The latter problem is well-studied due to its important applications, e.g., in program verification~\cite{AlurHenzingerKupferman02,Vardi08} and program synthesis~\cite{KupfermanVardi05}.
Hence, there exist efficient algorithms~\cite{ChatterjeeHenzinger12} for solving them as well as mature solvers~\cite{FriedmannLange09,Keiren09}.
Thus, we expect our novel algorithm to outperform the previous approach~\cite{WeinertZimmermann16a} to the above problems.

Moreover, in previous work we investigated the problem of solving two-player games on a visibly pushdown arena in which the winning condition is given by a \vldl formula and determined this problem to be \threeexp-complete~\cite{WeinertZimmermann16a}.
We showed membership of this problem in~\threeexp by reducing it to the problem of solving visibly pushdown games against a winning condition given by visibly pushdown automata.
Currently, we are investigating whether there exists a reduction of the former problem to that of solving games in which the winning condition is given via tree automata that yields an asymptotically optimal algorithm.

\paragraph{Acknowledgements} The author would like to thank Martin Zimmermann for multiple fruitful discussions.

\bibliographystyle{splncs03}
\bibliography{literature}

\end{document}